\title{Worst-Case Efficient Dynamic Geometric Independent Set} 
\author{Jean Cardinal\thanks{Universit\'{e} Libre de Bruxelles (ULB).
\texttt{\{cardinaljean,johniacono,gregkoumoutsos\}@gmail.com}} 
\qquad 
John Iacono\footnotemark[1] \hskip 3pt
\thanks{Supported by the Fonds de la Recherche Scientifique-FNRS under Grant no MISU F 6001 1.} \hskip 3pt 
\thanks{New York University, USA.} 
\qquad
Grigorios Koumoutsos\footnotemark[1] 
\hskip 3pt 
\thanks{Supported by the Fonds de la Recherche Scientifique-FNRS under a ``Scientific Collaborator'' grant.}\\[1.2ex]
Université libre de Bruxelles (ULB), Belgium}
\newtheorem{theorem}{Theorem}[section]
\newtheorem{definition}[theorem]{Definition}
\newtheorem{lemma}[theorem]{Lemma}
\newtheorem{observation}[theorem]{Observation}
\newcounter{note}[section]
\newtheorem{fact}[theorem]{Fact}
\DeclareMathOperator{\OPT}{OPT}
\DeclareMathOperator{\MIX}{MIX}
\DeclareMathOperator{\dist}{dist}
\DeclareMathOperator{\IN}{IN}
\DeclareMathOperator{\OUT}{OUT}
\begin{document}

\maketitle

\thispagestyle{empty} 
\begin{abstract}
    We consider the problem of maintaining an approximate maximum independent set of geometric objects under insertions and deletions.
    We present data structures that maintain a constant-factor approximate maximum independent set for broad classes of fat objects in $d$ dimensions, where $d$ is assumed to be a constant, in sublinear \textit{worst-case} update time. 
    This gives the first results for dynamic independent set in a wide variety of geometric settings, such as disks, fat polygons, and their high-dimensional equivalents. 

    Our result is obtained via a two-level approach. First, we develop a dynamic data structure which stores all objects and provides an approximate independent set when queried, with output-sensitive running time. We show that via standard methods such a structure can be used to obtain a dynamic algorithm with \textit{amortized} update time bounds. Then, to obtain worst-case update time algorithms, we develop a generic deamortization scheme that with each insertion/deletion keeps (i) the update time bounded and (ii) the number of changes in the independent set constant. We show that such a scheme is applicable to fat objects by showing an appropriate generalization of a separator theorem. 
    
    Interestingly, we show that our deamortization scheme is also necessary in order to obtain worst-case update bounds:  If for a class of objects our scheme is not applicable, then no constant-factor approximation with sublinear worst-case update time is possible. We show that such a lower bound applies even for seemingly simple classes of geometric objects including axis-aligned rectangles in the plane.
\end{abstract}

\vfill
\pagebreak
\tableofcontents
\thispagestyle{empty} %
\vfill
\pagebreak

\setcounter{page}{1}

\includecomment{onlymain}
\excludecomment{onlyapp}

\begin{onlymain}
\section{Introduction}

Dynamic algorithms for classic NP-hard optimization problems is a quite active research area and major progress has been achieved over the last years  (see~\cite{GKKP17,AAGPS19,BHN19,BK19,BHNW21}). For such problems, typically improved performance can be achieved when their \textit{geometric} versions are considered, i.e., when the input has a certain geometric structure. As a result, a very recent line of research considers dynamic algorithms for well-known algorithmic problems in the geometric setting~\cite{AgSoCG20,HNW20,BCIK20, CMR20,CH21}. In this work, we continue this investigation by considering the geometric version of the Maximum Independent Set problem; some of our results improve upon the previous (very recent) work of Henzinger et al.~\cite{HNW20} and Bhore et al.~\cite{BCIK20} and some others provide the first dynamic data structures for various geometric instances.   

\paragraph{Static Independent Set.} In the maximum independent set problem, we are given a graph $G = (V,E)$ and we aim to produce a subset $I \subseteq V$ of maximum cardinality, such that no two vertices in $I$ are adjacent. This is one of the most well-studied algorithmic problems. It is well-known to be \textsf{NP}-complete and hard to approximate: no polynomial time algorithm can achieve an approximation factor $n^{1-\epsilon}$, for any constant $\epsilon>0$, unless \textsf{P=NP}~\cite{Zuck07,Hastad1999}.  

\paragraph{(Static) Geometric Independent Set.} In the geometric version of the maximum independent set problem, the graph $G$ is the intersection graph of a set $V$ of geometric objects: each vertex corresponds to an object, and there is an edge between two vertices if and only if the corresponding objects intersect. 

Besides the theoretical interest, the study of independent sets of geometric objects such as axis-aligned squares, rectangles or disks has been motivated by applications in various areas such as VLSI design~\cite{HM85}, map labeling~\cite{AKS97,VA99} and data mining~\cite{KMP98,BDMR01}.

For most classes of geometric objects, the problem remains NP-hard. A notable exception is the case where all objects are intervals on the real line (the well-known \textit{interval scheduling} problem), where the optimal solution can be computed in polynomial time using a folklore greedy algorithm \cite{DBLP:journals/siamcomp/Gavril72}. However, even the simplest 2d-generalization to axis-aligned unit squares is NP-hard~\cite{fowler1981optimal}. As a result, most of the related work has focused on developing PTASs for certain versions of geometric independent set. This has been achieved for various types of objects such as squares, hypercubes, fat objects and pseudodisks~\cite{HM85,erlebach2005polynomial,chan2003polynomial,chan2012approximation}.

A substantially harder setting is the when the geometric objects are arbitrary (axis-aligned) rectangles in the plane: Despite the intense interest (see~\cite{aw-asmwir-13,ce-amir-16}), no PTAS is known. Until very recently the best known approximation factor was $O(\log \log n)$~\cite{Parinya09,CW21}; in 2021, several $O(1)$-approximation algorithms were claimed~\cite{M21,GKMM+21}. 

\paragraph{Dynamic Geometric Independent Set.} In the dynamic version of geometric independent set, $V$ is a class of geometric objects (for instance squares in the plane) and we maintain a set $S \subseteq V$ of \textit{active} objects. Objects of $V$ may be inserted in $S$ and deleted from $S$ over time and the goal is to maintain a maximum independent set\footnote{This problem should not be confused by the related, but very different, \textit{maximal} independent set problem, where the goal is to maintain an inclusionwise maximal independent set subject to updates on the edges (and not the vertices). This problem has been studied in the dynamic setting, with a remarkable recent progress after a sequence of breakthrough results~\cite{AOSS18,AOSS19,CZ19,BDHSS19}.} of $S$, while keeping the time to implement the changes (the \textit{update time}) sublinear on the size of the input.

\paragraph{Previous Work.} Most previous work dates from 2020 onwards. The only exception is the work of Gavruskin et al.~\cite{gavruskin2015dynamic} who considered the very special case of intervals where no interval is fully contained in any other interval. The study of this area was revitalized by a paper of Henzinger, Neumann and Wiese in SoCG'20~\cite{HNW20}. They presented deterministic dynamic algorithms with approximation ratio $(1+\epsilon)$ for intervals and $(1+\epsilon)\cdot 2^d$ for $d$-dimensional hypercubes, under the assumption that all objects are contained in $[0,N]^d$ and each of their edges has at least unit length. Their update time was polylogarithmic in both $n$ and $N$, but $N$ could be unbounded in $n$. Furthermore, they showed that no $(1+\epsilon)$-approximation scheme with sublinear update time is possible, unless the Exponential Time Hypothesis fails (this follows almost directly from a lower bound of Marx~\cite{Marx07} for the offline problem). 

Subsequently Bhore et al.~\cite{BCIK20} obtained the first results for dynamic geometric independent set without any assumption on the input. For intervals, they presented a dynamic $(1+\epsilon)$-approximation with logarithmic update time; this was recently simplified and improved by Compton et al.~\cite{CMR20}.  For squares,~\cite{BCIK20} presented a randomized algorithm with expected approximation ratio roughly $2^{12}$ (generalizing to roughly $2^{2d+5}$ for $d$-dimensional hypercubes) with amortized update time $O(\log^5 n)$ (generalizing to $O(\log^{2d+1} n)$ for hypercubes).

\subsection{Our results}
In this work we obtain the first dynamic algorithms for fat objects (formally defined in Section~\ref{s:makemix}) in fixed dimension $d$ with sublinear worst-case update time. The precise bounds on the approximation ratio and the update time depend on the fatness constant and the running time for worst-case range query structures for each family of objects. However, the results follow from a generic approach, applicable to all those classes of objects.

\begin{theorem} \label{t:main}
There exists a deterministic data structure for maintaining a $O(1)$-approximate independent set of a collection of fat objects, with sublinear worst-case update time, that reports $O(1)$ changes in the independent set per update. In particular, it achieves the following approximation ratios and worst-case update times, for any constant $0 < \epsilon \leq 1/4$:

\begin{itemize}
    \item $(4+\epsilon)$-approximation with $O(n^{3/4})$ update time for axis-aligned squares in the plane.
    \item $(5+\epsilon)$-approximation with $\tilde{O}(n^{3/4})$ update time for disks in the plane\footnote{Throughout this paper, $\tilde{O}$ suppresses polylogarithmic factors.}.
    \item $(2^d+\epsilon)$-approximation with $O(n^{1-\frac{1}{2d}})$ update time for $d$-dimensional hypercubes. 
    \item $O(1)$-approximation with $O(n^{1-\frac{1}{2dk}})$ update time for fat objects which are unions of $k$ (hyper)rectangles.
    \item $O(1)$-approximation with $\tilde{O}(n^{1-\frac{1}{d(d+1)}})$ update time for fat simplices in $d$ dimensions.
    \item $O(1)$-approximation with $\tilde{O}(n^{1-\frac{1}{d(d+1)k}})$ update time for fat objects which are unions of $k$ simplices in $d$ dimensions.
    \item $O(1)$-approximation with $\tilde{O}(n^{1-\frac{1}{d+2}})$ update time for balls in $d$ dimensions. 
\end{itemize}
\end{theorem}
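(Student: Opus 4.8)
The plan is to prove Theorem~\ref{t:main} via the two-level construction outlined in the abstract, instantiated with class-specific range-searching primitives. The first level is an \emph{output-sensitive query structure}: for any family of fat objects I would maintain, under insertions and deletions, a structure on the active set $S$ that on demand returns an $O(1)$-approximate maximum independent set of $S$ in time proportional to the size of the returned set, up to the cost of the underlying range queries. The natural tool is a shifted/hierarchical grid: objects are classified by scale, snapped to grid cells at their scale, each cell admits only $O(1)$ pairwise-disjoint fat objects by fatness (Section~\ref{s:makemix}), and a greedy selection sweeping from small to large scales returns a solution of size a constant fraction of $\OPT(S)$; the constant is governed by the fatness parameter and the grid granularity, which is exactly what produces the ratios $4+\epsilon$, $5+\epsilon$, $2^d+\epsilon$, $\dots$ listed. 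The only class-dependent ingredient is a worst-case dynamic data structure for the relevant emptiness/reporting queries --- orthogonal range trees for squares and hypercubes, partition trees and cuttings for simplices, the standard lift of balls to halfspaces in dimension $d+1$ --- whose query exponents $1-\tfrac1{2d}$, $1-\tfrac1{d(d+1)}$, $1-\tfrac1{d+2}$ are precisely those in the statement; the unions-of-$k$ variants are handled by treating the $k$ pieces separately, which multiplies the relevant denominator by $k$.

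The second level turns query time into update time. Because each update changes $\OPT(S)$ by at most one, a lazy scheme that recomputes the independent set once per $\Theta(\epsilon\,|I|)$ updates keeps a $(c+\epsilon)$-approximation whenever the query structure is a $c$-approximation, and the recomputation cost amortized over that interval is sublinear; this already gives a dynamic algorithm with \emph{amortized} update time of the stated order. To reach \emph{worst-case} bounds and $O(1)$ reported changes per update I would spread each recomputation over the next $\Theta(|I|)$ updates, building the new solution incrementally in the background and then morphing the old independent set $I_1$ into the new one $I_2$ through a sequence of independent sets in which consecutive sets differ in $O(1)$ objects and every set has size at least $(1-\epsilon)\min(|I_1|,|I_2|)$, so the approximation guarantee never degrades. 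Realizing this $\MIX$ operation with $O(1)$ symmetric difference per step \emph{and} sublinear work per step is the heart of the argument.

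This is where an appropriate generalization of the geometric separator theorem enters. The intersection graph of a bounded-ply family of fat objects admits balanced separators of sublinear size (note $I_1\cup I_2$ has ply at most $2$); I would prove such a statement by adapting the Miller--Teng--Thurston--Vavasis / Smith--Wormald style of argument to the fatness notion of Section~\ref{s:makemix}. Given such separators one can arrange the morphing recursively: split along a separator, $\MIX$ the two sides, and reconcile the constantly-many separator objects, so that each elementary step changes $O(1)$ independent-set members and recomputes only within one small piece. Assembling everything, the worst-case cost per update is the maximum of (i) one background query-structure operation, hence the range-query exponent of the class, and (ii) the per-step $\MIX$ cost, which the separator bound keeps within the same order; combined with the fatness-dependent constant from the first level this yields each line of Theorem~\ref{t:main}.

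I expect the deamortization, and specifically the $\MIX$/separator step, to be the main obstacle. One must simultaneously guarantee that each intermediate set is a valid independent set, that it stays within a $(1+\epsilon)$ factor of the endpoints in size, and that it is reachable from its predecessor by only $O(1)$ changes, all while keeping the cumulative background recomputation sublinear per update; it is the need to satisfy all of these at once that forces the separator machinery and its careful adaptation to fat objects, and --- as the abstract notes --- makes the scheme not merely sufficient but in a sense necessary.
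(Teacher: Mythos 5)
Your plan is essentially the paper's proof: a dynamic output-sensitive $O(1)$-approximate query structure built from a smallest-first greedy implemented on class-specific range-searching structures (kd-trees, partition trees, lifting for balls), a periodic-rebuild scheme giving amortized bounds, and a deamortization that computes the new solution in the background and morphs the old independent set into the new one via a separator-based MIX adapted from Smith--Wormald, which is exactly how the paper obtains the stated exponents, the $(\beta+\epsilon)$-type ratios, and $O(1)$ reported changes per update. Only minor slips: the constants $4$, $5$, $2^d$ come from the size-sorted greedy and the fatness definition rather than grid granularity, the separator crosses $O(n^{1-1/d})$ objects (removed and re-added one at a time, with the loss controlled over the logarithmically many recursion levels) rather than constantly many, and balls are lifted to dimension $d+2$, consistent with your exponent $1-\tfrac{1}{d+2}$.
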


This result gives the first dynamic algorithms with sublinear update time for all the aforementioned classes of objects, apart from $d$-dimensional hypercubes. Moreover, for hypercubes our result is the first with sublinear \textit{worst-case update time bounds} without assumptions on the input; it also achieves the same approximation ratio as~\cite{HNW20}, which is the best known for any dynamic setting. 


In fact, it seems hard to significantly improve our result on any aspect: First, for the approximation factor, as mentioned above, Henzinger, Neumann and Wiese~\cite{HNW20} proved that (under the Exponential Time Hypothesis) one cannot maintain a $(1 + \epsilon)$-approximate maximum independent set of hypercubes in $d\geq 2$ dimensions with update time $n^{O((1/\epsilon)^{1-\delta})}$ for any $\delta > 0$; therefore the only potential improvement in the approximation ratio is by small constant factors.
Moreover, the update time we obtain is essentially the time required for detecting intersections between objects in a range query data structure\footnote{In a previous version of this paper~\cite{CIK21arxiv,CIK21}, we erroneously claimed a polylogarithmic update time for squares and hypercubes. We withdrew this claim. Reaching polylogarithmic worst-case update time bounds for these objects would require new ideas, distinct from those given in Section~\ref{s:makedisq}.}.
Fatness of the objects is a sensible condition for achieving such results:
we prove that for (nonfat) rectangles, ellipses, or arbitrary polygons, no dynamic approximation in sublinear worst-case update time is possible. 
Finally, we emphasize the remarkable additional property that the number of changes reported per update is always constant.

To obtain the result of Theorem~\ref{t:main} we develop a generic method to obtain dynamic independent set algorithms and show how to apply it for fat objects. 
Our method uses a combination of several components. The first is a data structure which stores all objects, supports insertions and deletions in sublinear $f(n)$ time and, when queried, returns a $\beta$-approximate independent set with output-sensitive running time. The second main component is a generic transformation of such a data structure into a dynamic algorithm with approximation factor $\beta+\epsilon$ and \emph{amortized} update time $O(f(n))$. This is done by periodically updating the solution using the data structure. Then, we apply a generic \emph{deamortization} scheme, involving what we call a \emph{MIX} function, a generic way to switch ``smoothly'' from one solution to another. We design such a MIX function for fat objects using geometric separators. 

\subsection{Notation and preliminaries}
\paragraph{Notation.} 
In what follows, $V$ is a class of geometric objects, such as squares in the plane, and we consider a finite set $S \subseteq V$.
A subset of $S$ is a \emph{maximum independent set (MIS)} of $S$ if all its objects are pairwise disjoint 
and it has maximum size over all sets with this property. We use $\OPT$ to denote the size of a maximum independent set. We use $n$ to denote $|S|$, unless otherwise specified. 
We say that ${I}$ is a $\beta$-approximate MIS if its size is at least $\beta \OPT$, for a constant $0<\beta\leq 1$ \footnote{Note that while stating the main result in Theorem~\ref{t:main}, we used the convention that the approximation ratio is $>1$. This is done mainly for aesthetical reasons, making the result easier to parse. From now on, we assume $\beta <1$; it is easy to see that this is equivalent to a data structure achieving a $(1/\beta)$-approximation in the language of Theorem~\ref{t:main}.}.

The problem we study is to maintain an approximate MIS set $I$ under a sequence of insertions and deletions in $S$, which we call generically an \emph{update}, starting from an empty set. 
This can be expressed as implementing a single operation $\Delta = \textsc{Update}(u)$, where $u$ is the object to be inserted or deleted, and $\Delta$, the \emph{update set}, is the set of objects that change in the approximate MIS $I$, presented as the symmetric difference from the previous set.
In general we will use subscripts to express variables' states after the indicated update, and unsubscripted versions for the current state.
Using the operator $\oplus$ to denote the symmetric difference, we therefore have $S_i \coloneqq \oplus_{j=1}^{i} \{u_j\}$ and ${I}_i \coloneqq \oplus_{j=1}^{i} \Delta_j$. 
We say that $\textsc{Update}$ is a $\beta$-approximate MIS algorithm if ${I}_i$ is always a $\beta$-approximate MIS of $S_i$.

\paragraph{Comment on the model.} Here, we adopt the convention that the update set $\Delta_i$ is always returned explicitly and thus the running time of an update $u_i$ is at least the size of the returned update set:

\begin{fact}\label{f:zero}
The running time of performing an update is at least $|\Delta_i|$, i.e., the size of the update set.
\end{fact}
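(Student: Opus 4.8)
This statement is essentially a bookkeeping consequence of the computational model adopted in the preceding paragraph, so the ``proof'' is a one-line observation rather than a genuine argument. The plan is to recall that, by convention, $\textsc{Update}(u_i)$ must \emph{explicitly output} the update set $\Delta_i$ (the symmetric difference between the new and old approximate MIS), and that in any reasonable model of computation writing down a set of $|\Delta_i|$ objects costs at least $|\Delta_i|$ elementary steps (one per object produced). Hence the total running time of the $i$-th update operation is at least $|\Delta_i|$.

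The only thing worth spelling out is why we cannot charge the cost of reporting $\Delta_i$ to some earlier operation or amortize it away: each $\Delta_j$ is reported at the time of update $j$ and is a distinct output event, so the reporting cost of update $i$ is incurred entirely within update $i$. I would therefore just state: ``Since $\textsc{Update}$ returns $\Delta_i$ explicitly, and producing $|\Delta_i|$ objects as output requires at least $|\Delta_i|$ steps, the running time of the $i$-th update is at least $|\Delta_i|$.''

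There is no real obstacle here; the purpose of isolating this as a labeled \texttt{Fact} is presumably to invoke it cleanly later — in particular, it gives an immediate lower bound of $\Omega(\max_i |\Delta_i|)$ (or $\Omega(\sum_i |\Delta_i|)$ in the amortized setting) on update time, which is precisely why the paper emphasizes that its deamortization scheme keeps $|\Delta_i| = O(1)$. So in the write-up I would keep the proof to a single sentence and perhaps add a remark pointing forward to the role this plays in motivating the constant-recourse guarantee of Theorem~\ref{t:main}.
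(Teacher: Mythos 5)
Your proposal is correct and matches the paper's own justification: the paper states the convention that $\Delta_i$ is returned explicitly immediately before the Fact, and the Fact follows as a one-line consequence exactly as you argue. No further elaboration is needed.
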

We proceed with a simple yet  crucial observation about MIS.
\begin{fact} \label{f:one}
The size of a MIS can change by at most one with every update: $|\OPT_i - \OPT_{i-1}| \leq 1$.
\end{fact}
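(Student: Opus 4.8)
The plan is to reduce the statement to two elementary observations about independent sets under a single change to the ground set, and then combine them. An update $u_i$ either inserts $u_i$ into $S_{i-1}$ or deletes it from $S_{i-1}$, and these two cases are symmetric under exchanging the roles of the indices $i-1$ and $i$; so it suffices to analyze one of them, say the insertion $S_i = S_{i-1} \cup \{u_i\}$, and then invoke the symmetry.

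First I would establish the ``monotonicity'' direction: every independent set of $S_{i-1}$ is also an independent set of $S_i$, since enlarging the ground set (adding a vertex to the intersection graph) cannot destroy the pairwise disjointness of a set of objects that does not contain the new one. Hence $\OPT_i \ge \OPT_{i-1}$. Second, for the reverse inequality, take a maximum independent set $I$ of $S_i$, so $|I| = \OPT_i$. The set $I \setminus \{u_i\}$ is a subset of an independent set, hence itself independent, and it is contained in $S_{i-1}$ because it avoids $u_i$; its size is at least $\OPT_i - 1$. Therefore $\OPT_{i-1} \ge \OPT_i - 1$, i.e., $\OPT_i \le \OPT_{i-1} + 1$. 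Combining the two bounds gives $0 \le \OPT_i - \OPT_{i-1} \le 1$ for an insertion, and applying the identical argument with the roles of $i$ and $i-1$ interchanged yields $-1 \le \OPT_i - \OPT_{i-1} \le 0$ for a deletion. In either case $|\OPT_i - \OPT_{i-1}| \le 1$.

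I do not anticipate any genuine obstacle: the argument is a couple of lines. The only points requiring a little care are to make the insertion/deletion symmetry explicit rather than writing both cases out in full, and to keep the reasoning about $\OPT$ entirely independent of the algorithm's update set $\Delta_i$ (the fact is about the optimum, not about what the data structure reports).
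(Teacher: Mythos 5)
Your proof is correct: the monotonicity observation plus the ``delete $u_i$ from a maximum independent set'' argument, together with the insertion/deletion symmetry, is exactly the standard justification for this fact. The paper states Fact~\ref{f:one} without proof, treating it as self-evident, and your argument is precisely the elementary reasoning it implicitly relies on, so there is nothing to add.
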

Note that this fact does not hold for the weighted version of the MIS problem.
Also note that this does not say that it is possible to have an update algorithm with an update set $\Delta_i$ with cardinality always at most $1$; this fact bounds how the size of a MIS can change after an update, and not the content. 
In fact, it is easy to produce examples where $|\Delta_i|$ must be 2, even for intervals (e.g.~$u_1=[1,4]$, $u_2=[2,3]$, $u_3=[1,4]$). 
However, it does leave open the possibility to have an update operation returning constant-size update sets, which is something we will achieve for the classes of geometric objects we consider.

\end{onlymain}

\begin{onlymain}

\section{Overview} \label{s:prelim}

Our goal is thus to develop a general method for dynamic approximate MIS that has efficient worst-case running times and small update set sizes for various classes of geometric objects. We identify two ingredients that are needed: 
\begin{description}
    \item[$\beta$-dynamic independent set query structure ($\beta$-DISQS):] 
    This an abstract data type for maintaining a set $S$ of $n$ objects, in which one can insert or delete an object in time $f(n)$, and obtain a $\beta$-approximate MIS of the current set $S$ in output-sensitive time $kf(n)$ if the set returned has size $k$. 
    \item[MIX algorithm:] A MIX algorithm receives two independent sets $S_1$ and $S_2$ whose sizes sums to $n$ as input, and smoothly transitions from $S_1$ to $S_2$ by adding or removing one element at a time such that at all times the the intermediate sets are an independent sets of size at least $\min(|S_1|,|S_2|)-o(n)$. The running time of the MIX algorithm is said to be $\gamma(n)$ if the entire computation takes time $n\gamma(n)$.
\end{description}
The plan of the paper is as follows.

\paragraph{Section~\ref{s:dyn}: Amortized Update Time.} In this section, we prove that the first ingredient, the $\beta$-DISQS, is sufficient to obtain a $(\beta-\epsilon)$-approximate dynamic MIS algorithm with $O(f(n))$ amortized update time for any $\epsilon>0$. This is presented as Theorem~\ref{th:dyn}. We note that this is a bit of a ``folklore'' algorithm that essentially does nothing more than periodically querying the $\beta$-DISQS.

\paragraph{Section~\ref{s:deam}: Worst-case Update time.}
The second ingredient, MIX, is vital to deamortizing: In Section~\ref{s:deam}, we show that a DISQS and a MIX together are sufficient to produce a data structure which for any constant $\epsilon>0$ maintains an approximate MIS of size at least $(\beta-\epsilon)\OPT-o(\OPT)$, 
has a worst-case running time of $O(f(n)+\gamma(n)+\log n)$, and whose \textsc{Update} operation returns a constant-sized update set; this is presented as Theorem~\ref{th:deam}. We also show in  that the non-existence of a MIX function implies the impossibility of a approximate MIS data structure with sublinear update set size, hence sublinear worst-case running time, and that this impossibility holds for rectangles and other classes of nonfat objects (Theorem~\ref{t:nomix} and Lemma~\ref{l:noapprox}).\\

\noindent Given this generic scheme developed, to prove our main result, it remains to show that fat objects have a MIX function and a DISQS, both with sublinear update time. This is the content of Sections~\ref{s:makemix} and~\ref{s:makedisq}.

\paragraph{Section~\ref{s:makemix}: Existence of a MIX function for fat objects.} 
We show in Lemma~\ref{lem:mix_result} that for fat objects in any constant dimension a MIX algorithm always exists with $\gamma(n)=O(\log n)$. This is achieved via geometric separators~\cite{DBLP:conf/focs/SmithW98}. 

\paragraph{Section~\ref{s:makedisq}: Existence of $\beta$-DISQS for fat objects.} Last, we show that obtaining a $\beta$-DISQS is possible for many types of fat objects using variants of standard range query data structures (kd-trees for orthogonal objects and partition trees for non-orthogonal objects such as disks, triangles or general polyhedra) with the running time $f(n)$ matching the query time of such structures. The result is based on a simple greedy algorithm for the MIS problem on fat objects~\cite{EKNS00,MBHRR95},
yielding an approximation factor $\beta$ that only depends on the fatness constant.
\end{onlymain}

\begin{onlymain}
\section{Dynamization} \label{s:dyn}
In this section we define formally the $\beta$-dynamic independent set query structure ($\beta$-DISQS) and show that its existence implies a dynamic independent set algorithm with approximation ratio $\beta-\epsilon$ and sublinear amortized update time.


\begin{definition}
A \emph{$\beta$-dynamic independent set query structure ($\beta$-DISQS)} is a data structure that maintains a set $S$ whose size is denoted as $n$ and supports the following operations:
\begin{itemize}
    \item $\textsc{Update}(u)$: Insert or remove $u$, so that $S$ becomes $S \oplus u$.
    \item $\textsc{Query}$: Returns a $\beta$-approximate MIS of $S$
\end{itemize}
We say that the running time of a DISQS is $f(n)$ if \textsc{Update} takes time $f(n)$ and if \textsc{Query} returns a set of size $k$ in time $k f(n)$. We require $f(n)$ to be sublinear.
\end{definition}

\paragraph{From $\beta$-DISQS to dynamic with amortized update time.}
We now show that a $\beta$-DISQS is sufficient to give a approximate MIS data structure with an amortized running time, with a loss of only $\epsilon$ in the approximation factor for any $\epsilon>0$. The intuition is simple, pass through the update operations to the DISQS and periodically replace the approximate MIS seen by the user by querying the DISQS. The only subtlety is to immediately remove any items from the approximate MIS that have been deleted in order to keep the approximate MIS valid. This simple transformation is likely folklore, but we work out the details for completeness.

\begin{restatable}{theorem}{tdyn} \label{th:dyn}
Given a $\beta$-DISQS with sublinear running time $f(n)$ for an independent set problem and  $0 < \epsilon < 1$, there is a fully dynamic data structure to maintain a $(1-\epsilon)\cdot\beta$-approximate independent set that runs in  $O\left(\frac{1}{\epsilon}f(n) + \log n \right)$ amortized time per operation. 
\end{restatable}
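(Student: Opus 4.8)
The plan is to run the given $\beta$-DISQS in the background on the active set $S$, keep a separate copy of the reported solution $I$, and refresh $I$ periodically by querying the DISQS. The guiding observation is that between refreshes an independent set can only be invalidated by \emph{deletions} — an insertion into $S$ never breaks independence of $I\subseteq S$ — so between refreshes it suffices to delete from $I$ any object that leaves $S$. Concretely, the structure stores the DISQS, a dictionary holding $I$, and a counter $t$ of the updates performed since the last \textsc{Query}. On $\textsc{Update}(u)$ we forward $u$ to the DISQS; if $u$ is a deletion with $u\in I$ we remove it from $I$; we increment $t$; and if $t$ has reached a threshold $L$ fixed at the start of the current phase (defined below) we call \textsc{Query}, let $I'$ be the returned $\beta$-approximate MIS, set $I\leftarrow I'$ and $t\leftarrow 0$, and recompute $L$. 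In all cases we return as $\Delta$ the symmetric difference between the reported set after the update and the reported set before it.

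We set the threshold to $L:=\lfloor\tfrac{\epsilon}{2}k\rfloor$, where $k$ is the size of the set returned by the \textsc{Query} that started the current phase; note that when $k<2/\epsilon$ we have $L=0$, which prescribes re-querying on \emph{every} update. For the approximation bound, fix a phase starting with a \textsc{Query} returning $I$ with $|I|=k\ge\beta\OPT_0$ and consider the state after $0\le t\le L$ further updates. Each update removes at most one element from $I$, so $|I_t|\ge k-t$, and by Fact~\ref{f:one} we have $\OPT_t\le\OPT_0+t\le k/\beta+t$. Consequently $|I_t|\ge(1-\epsilon)\beta\,\OPT_t$ follows once we establish $k-t\ge(1-\epsilon)\beta(k/\beta+t)$, which rearranges to
\[
\epsilon k \;\ge\; t\,\bigl(1+(1-\epsilon)\beta\bigr),
\]
and this holds since $t\le L\le\epsilon k/2$ and $1+(1-\epsilon)\beta<2$. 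If instead $L=0$, then after every update $I$ is the output of a fresh \textsc{Query}, so $|I|\ge\beta\OPT\ge(1-\epsilon)\beta\OPT$; and if $S=\emptyset$ then $\OPT=0$ and $I=\emptyset$ vacuously. Hence the reported set is a $(1-\epsilon)\beta$-approximate MIS throughout.

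For the running time we amortize over phases. Consider a phase whose opening \textsc{Query} returns $k$ objects: it costs $O(kf(n))$ for the query, $O(\log n)$ per object inserted into or removed from $I$ for the dictionary bookkeeping and the emission of the symmetric difference (charging each removal of a previously reported object to its insertion), and $\le L\le k$ forwardings to the DISQS at $O(f(n))$ each — in total $O\bigl(k(f(n)+\log n)\bigr)$ over the $\max(1,L)$ update operations of the phase. If $k\ge 4/\epsilon$ then $\max(1,L)\ge\epsilon k/4$, so the amortized cost per update is $O\!\bigl(\tfrac1\epsilon(f(n)+\log n)\bigr)$; if $k<4/\epsilon$ then every individual update — the query update included, since it then costs $O\bigl(k(f(n)+\log n)\bigr)$ with $k<4/\epsilon$ — already costs $O\!\bigl(\tfrac1\epsilon(f(n)+\log n)\bigr)$. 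Because $n$ varies by at most $L\le n/2$ within a phase, $f(n)$ changes by at most a constant factor over it, which does not affect the asymptotics. Using the innocuous assumption $f(n)=\Omega(\log n)$ — it holds for every DISQS we construct, and is anyway WLOG since each update performs $\Omega(\log n)$ dictionary work regardless — this is $O\!\bigl(\tfrac1\epsilon f(n)+\log n\bigr)$ amortized, as claimed.

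I expect no genuinely hard step; the paper itself flags this transformation as folklore. The only point requiring care is that the approximation is \emph{multiplicative}: an additive drift of $\Theta(\epsilon k)$ between refreshes is tolerable precisely because it is a $\Theta(\epsilon)$ fraction of $\OPT$, which forces re-querying at every update once the solution drops below size $\Theta(1/\epsilon)$ — and one must then confirm (as above) that this degenerate regime does not spoil the amortized bound. The limiting cases $S=\emptyset$ and $k=0$ mid-phase are immediate from $I\subseteq S$.
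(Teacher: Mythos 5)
Your proposal is correct and follows essentially the same route as the paper's proof: forward every update to the DISQS, lazily delete removed objects from the maintained independent set, re-query after roughly an $\epsilon$-fraction of the last query's output size, bound the drift of $\OPT$ via Fact~\ref{f:one}, and amortize the output-sensitive query cost over the round. Your explicit treatment of the small-$k$/empty regime and the $f(n)=\Omega(\log n)$ remark are minor refinements of the same argument, not a different approach.
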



\begin{proof}
We describe the implementation of $\textsc{Update}(u)$. Initially set global variables $S={I}=\emptyset$, $t=0$, and $D$ is an empty $\beta$-DISQS. 
The sets $I$ and $S$ are stored in a balanced binary search tree, indexed according to some total order on the objects (for instance lexicographically by the coordinates that define them). 
The variables here are defined so that after the $t$th \textsc{Update} operation the invariants $I=I_t$ and $I_{\text{old}}=I_{t-i}$ hold. Pick $\epsilon'$ such that $\frac{1-\epsilon'}{1+\epsilon'} =1 - \epsilon $. It is easy to show that $\epsilon' = \Theta(\epsilon)$. 

\begin{algorithm}[H]
  \SetKwFunction{FMain}{Update}
  \SetKwFunction{FQ}{Query}
  \SetKwProg{Fn}{Function}{:}{}
  \Fn{\FMain{$u$}}{
  D.\FMain{u}\;
  Set $i = i+1$\;
  \uIf{$i = \left\lceil |{I}_{\text{old}}| \epsilon' \right\rceil$\tcc*[r]{periodic rebuild of ${I}$}}
{
        Set ${I}_{\text{new}} = D.\FQ()$ \;
        Set $\Delta = {I}_{\text{new}} \oplus {I}_{\text{old}}$ \;
        Set ${I}={I}_{\text{old}}={I}_{\text{new}}$ \;
        Set $i=0$ \;
}
\Else{
  \uIf{$u \in {I}$\tcc*[r]{$u$ is a delete}}{Set $\Delta=\{u\}$\;
  Remove $u$ from ${I}$}
\uElse{Set $\Delta=\emptyset$}}

\Return $\Delta$ 
}
\end{algorithm}

\noindent 
We first argue that ${I}$ is an independent set of $S$. This is the case after each rebuild, as the query to the DISQS returns an independent set by definition. Between rebuilds, items are only deleted from ${I}$, thus ${I}$ will never contain intersecting items. By deleting items from ${I}$ when they are deleted from $S$, this ensures that ${I}\subseteq S$.

%

\paragraph{Approximation ratio.} 
We now argue that  $|{I}|\geq (1-\epsilon) \cdot \beta \OPT$. To this end we make the following observation.

\begin{observation}
\label{obs_del}
If $i$ updates have been performed since the last rebuild, then we have that $|I| \geq |{I}_{\text{old}}| - i $. 
\end{observation}

This follows since in each update the size of $I$ can shrink by at most 1, which happens only in case $u_t$ is in ${I}$ and must be removed. We have that:


\begin{align*}
\frac{|{I}|}{\OPT}  & \geq \frac{|{I}_{\text{old}}| - i}{\OPT_{\text{old}} + i}    &  \text{Fact~\ref{f:one} and Observation~\ref{obs_del}}\\[1em]
 & \geq \frac{(1-\epsilon') |{I}_{\text{old}}|}{\OPT_{\text{old}} +\epsilon' \cdot|{I}_{\text{old}}|}  & i \leq  \epsilon' \cdot  |{I}_{\text{old}}| \\[1em]
 &\geq \frac{(1-\epsilon') |{I}_{\text{old}}|}{(1+\epsilon')\cdot \OPT_{\text{old}}}    &    |{I}_{\text{old}}|  \leq \OPT_{\text{old}} \\[1em]
 &\geq \frac{(1-\epsilon')\cdot \beta \cdot \OPT_{\text{old}}}{(1+\epsilon')\cdot \OPT_{\text{old}}}  & |{I}_{\text{old}}|  \geq \beta \cdot \OPT_{\text{old}} \\[1em]
 &= \frac{1-\epsilon'}{1+\epsilon'} \cdot \beta = (1-\epsilon) \cdot \beta .    & \text{Since } \frac{1-\epsilon'}{1+\epsilon'} =1 - \epsilon
\end{align*}


\paragraph{Update time.}  Finally, we need to argue that the amortized running time is  $O(\frac{1}{\epsilon}f(n)+\log n)$. 
If an operation is not a rebuild, the running time is simply $f(n)$ for the update in the DISQS and $O(\log n)$ to update the binary search tree containing $S$ (and possibly $I$).

After $\epsilon' \cdot |{I}_{\text{old}}|$ operations, $|{I}|f(n)$ time is spent with the call to update. Thus the amortized cost per operation for the rebuild is:

$$
\frac{|{I}|f(n)}{\epsilon' \cdot |{I}_{\text{old}}|}
\leq
\frac{(|{I}_{\text{old}}|+i)f(n)}{\epsilon' \cdot |{I}_{\text{old}}|} 
 \leq 
\frac{(1+\epsilon')|{I}_{\text{old}}|f(n)}{\epsilon'\cdot |{I}_{\text{old}}|}
 \leq 
\frac{(1+\epsilon')}{\epsilon'}f(n)
$$
Note that $n$ is the current size of the set $S_i$ at the time of the rebuild, but as this can only change by a $(1+\epsilon')$ factor between rebuilds, and $f$ is sublinear, this gives an amortized time of $\frac{(1+\epsilon')}{\epsilon'}(f(n_i)+o(f(n_i))$ for every update $u_i$. 
Thus combining the amortized running time for the rebuild with the $O(\log n)$ for the BST operations and $f(n)$ for the DISQS update, gives the total amortized running time of $O(\frac{1}{\epsilon'}f(n)+\log n) = O(\frac{1}{\epsilon}f(n)+\log n) $.
\end{proof}

\end{onlymain}
\begin{onlymain}

\section{Deamortization} \label{s:deam}
\end{onlymain}
\begin{onlyapp}
\section{Deferred proofs of section \ref{s:deam}}\label{app:deam}
\end{onlyapp}

\begin{onlymain}
In this section we present our deamortization technique. In particular, we describe a procedure, which we call MIX, which if exists, is used to transform a  $\beta$-DISQS into a deterministic dynamic algorithm for with worst-case update time guarantees and update set size bounds. 
We also show that if a MIX does not exist for an independent set problem, then no sublinear worst-case update time guarantees are possible. 

\paragraph{MIX function.} We now define our main ingredient  for deamortization, which essentially says that we can smoothly switch from one solution to another, by adding or removing one item at a time:

\begin{definition}[MIX function]
Given two solution sets $A$ and $B$, let $\MIX(A,B,i)$, for $i \in [0,|A|+|B|]$ be a set where: 
\begin{itemize}
    \item $\MIX(A,B,i)$ is always a valid solution
    \item $\MIX(A,B,0)=A$ 
    \item $\MIX(A,B,|A|+|B|)=B$ 
    \item $|\MIX(A,B,i)| \geq  \min(|A|,|B|)-\Gamma(|A|+|B|)$, for some $\Gamma(|A|+|B|)=o(|A|+|B|)$.
    \item $\MIX(A,B,i)$ and $\MIX(A,B,i+1)$ differ by one item.
\end{itemize}
\end{definition}

Given this purely combinatorial definition, we define a MIX algorithm as follows.

\begin{definition}
\label{def:mix-alg}
A MIX algorithm with running time $\gamma(n)$ is a data structure such that
\begin{enumerate}
    \item It is initialized with $A$, $B$, $i=0$ and it has a single operation \textsc{Advance} which advances $i$ and reports the single element in the symmetric difference $\MIX(A,B,i) \oplus \MIX(A,B,i-1)$.
    \item The initialization plus $|A|+|B|$ calls to \textsc{Advance}  run in total time $(|A|+|B|)\cdot \gamma(|A|+|B|)$.
\end{enumerate}
\end{definition}

\paragraph{Organization.} The rest of this section is organized as follows. In~\ref{sec:deam-alg} we show that given a $\beta$-DISQS and a MIX function for an independent set problem, we can produce a dynamic algorithm with worst-case update time guarantees and approximation ratio arbitrarily close to $\beta$. In~\ref{sec:deam-lb}, we show the necessity of MIX function; in other words, we show that if there does not exist a MIX function for an independent set problem, then no deterministic algorithms with worst-case update time guarantees exist. 

\subsection{Dynamic algorithm with worst-case update time}
\label{sec:deam-alg}

We now present our main theorem.
As we discuss next, the intuition expands upon that for Theorem~\ref{th:dyn}, that in addition to periodically using the $\beta$-DISQS to get a new solution, the MIX slowly transitions between the two previous solutions reported by the $\beta$-DISQS. 
Our result is the following. 

\begin{theorem} \label{th:deam}
Given a $\beta$-DISQS with running time $f(n)$ for an independent set problem, a $\gamma(n)$-time MIX algorithm with nondecreasing 
$\Gamma(n)=o(n)$, and an 
$0 <\epsilon < 1/4$, there is a fully dynamic data structure to maintain an independent set of size at least 
$(\beta-\epsilon)\OPT - o(\OPT) $.
 The data structure runs in  
 $O_{\epsilon,\beta}(f(n)+\gamma(n) + \log n)$ 
 worst-case time per operation, where $n$ is the current number of objects stored, and reports a $O_{\epsilon}(1)$ number of changes in the independent set per update. 
\end{theorem}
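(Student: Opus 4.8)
The plan is to combine three concurrent processes on carefully chosen, geometrically increasing "epochs" (phases). Fix a phase to be a block of $\Theta(\epsilon \cdot n)$ consecutive updates, where $n$ is the number of stored objects at the start of the phase; since $f$ and $\gamma$ are sublinear and $\Gamma$ is $o(n)$, the value of $n$ changes only by a $(1+O(\epsilon))$ factor within a phase, so all asymptotic statements are insensitive to which "$n$" we mean. During each phase we run, spread evenly over its $\Theta(\epsilon n)$ updates:

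\begin{enumerate}
    \item \textbf{DISQS maintenance.} Pass every insertion/deletion through the $\beta$-DISQS immediately; this costs $O(f(n))$ worst-case per update.
    \item \textbf{Recompute.} Starting at the beginning of the phase, call $D.\textsc{Query}$ and spread its $O(|I_{\text{new}}| f(n)) = O(\OPT \cdot f(n)) = O(n f(n))$ cost over the $\Theta(\epsilon n)$ updates of the phase, i.e. $O(f(n)/\epsilon)$ per update. This produces a fresh $\beta$-approximate solution $J$ by the end of the phase, computed against the set $S$ frozen at the phase start.
    \item \textbf{MIX.} During the phase, run the MIX algorithm to transition from the solution $J$ produced by the \emph{previous} phase's recompute to the solution currently being exposed to the user. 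MIX needs $|A|+|B| = O(n)$ \textsc{Advance} calls at $\gamma(n)$ each, i.e. $O(n\gamma(n))$ total, again spread as $O(\gamma(n)/\epsilon)$ per update. Each \textsc{Advance} changes the exposed independent set $I$ by exactly one element, so the user-visible update set has size $O(1)$ from this source.
\end{enumerate}

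So at any moment there are two "generations" of solutions in flight: the one from two phases ago (fully MIXed-in and being exposed), and the one from the previous phase (currently being MIXed in). The exposed set $I$ is maintained explicitly in a balanced BST ($O(\log n)$ per update) so we can delete from it any object that gets deleted from $S$ before it is MIXed away — this is exactly the subtlety handled in Theorem~\ref{th:dyn}, and it costs at most one extra change to $\Delta_i$. Altogether the worst-case update time is $O_{\epsilon,\beta}(f(n)+\gamma(n)+\log n)$ and $|\Delta_i| = O_\epsilon(1)$.

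\textbf{Approximation bound.} This is the delicate part. When a recompute that started against the frozen set $S^{\text{start}}$ finishes and then gets MIXed in, up to $O(\epsilon n)$ updates have elapsed during the recompute and another $O(\epsilon n)$ during the MIX, for a total of $O(\epsilon n)$ updates, hence by Fact~\ref{f:one} the true $\OPT$ has moved by at most $O(\epsilon n)$, and our exposed solution has also been forced to drop at most $O(\epsilon n)$ elements due to deletions of its members. The MIX guarantee only promises size $\min(|A|,|B|) - \Gamma(|A|+|B|)$ throughout the transition, costing an additive $\Gamma(O(n)) = o(n) = o(\OPT)$ term (using $\OPT = \Theta(n)$ is \emph{not} assumed — rather we carry the $o(\OPT)$ honestly; when $\OPT = o(n)$ the bookkeeping needs the $o(n)$ terms to be measured against $\OPT$, which follows because $\Gamma$ is applied to sets of size $O(\OPT + \epsilon n)$ and one argues the dominant regime). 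Chaining these: $|I| \ge (1-O(\epsilon))\beta \cdot \OPT^{\text{start}} - O(\epsilon n) - \Gamma(O(n)) \ge (\beta - \epsilon)\OPT - o(\OPT)$ after reparametrizing $\epsilon$ by a constant factor and folding the lower-order terms, analogous to the displayed chain of inequalities in the proof of Theorem~\ref{th:dyn} but with the extra additive slack from MIX and from the elapsed updates.

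\textbf{Main obstacle.} The hard part is the accounting that simultaneously (a) keeps the exposed set always a \emph{valid} independent set — MIX guarantees validity of its own intermediate sets, but we must argue the forced deletions of user-deleted objects never break validity (they don't: deletion only shrinks a set) and that we never expose an object not in the current $S$; and (b) shows the $o(\OPT)$ error term is genuinely sublinear in $\OPT$ and not merely in $n$, which requires checking the regime $\OPT \ll n$ and possibly stating the guarantee as $(\beta-\epsilon)\OPT - O(\Gamma(n))$ with $\Gamma(n)=o(n)$, matching the theorem statement. Getting the phase lengths and the "two generations in flight" pipeline lined up so that a finished recompute is always available exactly when the previous MIX completes is the bookkeeping core; once the pipeline is set up, each individual bound (time, update-set size, approximation) follows by the same style of computation as in Theorem~\ref{th:dyn}.
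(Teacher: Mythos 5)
Your overall architecture is the same as the paper's (a pipeline with two generations of DISQS solutions in flight, the query deamortized over a phase, MIX deamortized with a cap on changes per update, and a BST to force out user-deleted items), but there is a genuine gap in the approximation accounting that you yourself flag and do not resolve: you set the phase length to $\Theta(\epsilon n)$, where $n$ is the number of \emph{stored} objects, and your error chain consequently contains an additive $O(\epsilon n)$ term. When $\OPT \ll n$ this term is not $O(\epsilon\OPT)$ nor $o(\OPT)$; concretely, if $\OPT \approx \sqrt{n}$, an adversary can delete every element of the exposed independent set within the first $O(\sqrt{n}) \ll \epsilon n$ updates of a phase (while $\OPT$ stays essentially unchanged, since deletions of exposed elements need not touch an optimum), leaving you with an empty exposed set long before the next recompute/MIX arrives, so the bound $(\beta-\epsilon)\OPT - o(\OPT)$ fails. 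The paper's fix is precisely to measure phase length in units of the solution size, not the input size: round $k$ has length $R_k=\max\{1,\lfloor\epsilon|\hat{I}_{k-2}|\rfloor\}$, i.e.\ proportional to (a known, slightly stale) $\beta$-approximation of $\OPT$, and Claim~\ref{claim:relation} then shows $\OPT$ changes only by factors $g(\epsilon),h(\epsilon)=1\pm\epsilon\pm O(\epsilon^2)$ across rounds, so all staleness and deletion losses become $O(\epsilon)\OPT$. This choice also forces the extra bookkeeping you wave at (the round length must be computable from already-finished queries, hence the indexing by $k-2$, and the feasibility argument that the deamortized MIX and DISQS work actually finish within a round under the per-update time/count caps).

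A second, smaller inconsistency: you pass every update to the DISQS immediately while simultaneously spreading a \textsc{Query} over the phase and claiming the query is ``computed against the set $S$ frozen at the phase start.'' The DISQS abstraction gives no such guarantee if the structure is mutated mid-query. The paper avoids this by archiving updates in a queue and replaying only the previous round's updates into the DISQS during the current round, before querying, so the DISQS deliberately lags but is always queried in a consistent state. With the solution-size-based round lengths and this queueing discipline (plus the dual cap of $\phi(\epsilon)$ \textsc{Advance} calls or $\phi(\epsilon)\gamma(\cdot)$ time per update to bound both the update-set size and the running time), your plan matches the paper's proof; without them it does not establish the stated approximation guarantee.
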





\paragraph{High-level description.} 
We saw in the previous section how to obtain an amortized update time algorithm by splitting the update sequence into \emph{rounds} and query the DISQS at the end of each round to recompute an approximate MIS. Let $\hat{I}_k$ be the independent set produced by the DISQS at the end of round $k$. At a high-level, the main task is to deamortize the computation of $\hat{I}_k$: we can not afford computing it during one step. Instead, we compute $\hat{I}_k$ gradually during round $k+1$, making sure that the running time per step is bounded. $\hat{I}_k$ is eventually computed by the end of round $k+1$. At this point, we would like to have $\hat{I}_k$ (discarding elements deleted during round $k+1$) as our output independent set; however this can not be done immediately, since $\hat{I}_k$ might be very different from $\hat{I}_{k-1}$. For that reason, the switch from $\hat{I}_{k-1}$ to $\hat{I}_{k}$ is done gradually using the MIX function during round $k+2$. After all, our algorithm uses $\hat{I}_{k}$ as its independent set at the end of round $k+2$. 

 It follows that the independent set reported to the user is a combination of DISQS queries 3 or 2 rounds in the past. We show that by appropriately choosing the lengths of the rounds depending on the sizes of the independent sets, this lag affects the approximation factor by an additive $\epsilon$.





\begin{proof}[Proof of Theorem~\ref{th:deam}]
We group the updates $u_i$ into rounds, and use $r_k$ to indicate that $u_{r_k}$ is the last update of round $k$. Let $\hat{I}_k$ be the independent set output by the $\beta$-DISQS (if queried) at time $r_k$, i.e., at the end of $k$th round. 
The length of the $k$th round, $R_k = r_k-r_{k-1}$ is defined to be $R_k \coloneqq \max \lbrace 1, \lfloor \epsilon \cdot |\hat{I}_{k-2}|\rfloor \rbrace$ updates. 




%
%
%
For convenience, we define the following functions for any $0<\epsilon<1/4$:
\begin{align*}
& & g(\epsilon) = \frac{1+\sqrt{1-4\epsilon}}{2} & &     h(\epsilon) = \frac{3-\sqrt{1-4\epsilon}}{2}
&&
\phi(\epsilon) = \frac{16 h^2(\epsilon)}{\epsilon \cdot \beta \cdot g^3(\epsilon)}
\end{align*}
Note that $g(\epsilon) = 1-\epsilon-O(\epsilon^2)$ and $h(\epsilon) =1+\epsilon+O(\epsilon^2)$.  Note also that $g(\epsilon) \in (1/2,1)$ and in particular $g(\epsilon) \rightarrow 1$ as $\epsilon \rightarrow 0$. Similarly, $h(\epsilon) \in (1,3/2)$ and $h(\epsilon) \rightarrow 1$ as $\epsilon \rightarrow 0$. 

\medskip
\paragraph{Our Data Structures and Operations.} Our overall structure contains a $\beta$-DISQS and a MIX algorithm.
We also maintain the current active set of objects $S_i$ and our approximate independent set $I_i$ explicitly in a binary search tree; we refer to simply as $S$ and $I$, each stored according to some total order on the objects\footnote{Instead of a tree one could use a hash table, which would remove the additive logarithmic term from the update time, at the expense of randomization. However this will not improve our overall result, since functions $\gamma(n)$ and $f(n)$ are at least logarithmic in our application (see sections~\ref{s:makemix},\ref{s:makedisq}); therefore the logarithm can be absorbed while keeping the result deterministic.}.
We maintain the invariant that at the beginning of round $k$, the DISQS stores the set of objects $S_{r_{k-2}}$. Also, our structure stores $\hat{I}_{k-3}$ and $\hat{I}_{k-2}$. 

To execute update operations in round $k$, the following are performed:

\begin{description}
   \item[Running MIX slowly:] During round $k$ we use MIX to transition from $\hat{I}_{k-3}$ to $\hat{I}_{k-2}$. 
   This is done by initializing MIX with $\hat{I}_{k-3}$ and $\hat{I}_{k-2}$ and repeatedly running the \textsc{Advance} operation.
   After each update, we continue running MIX where we left off and continue until either $\phi(\epsilon) \cdot 
\gamma \left(|\hat{I}_{k-3}|+|\hat{I}_{k-2}|\right)$ time has passed or if
     $\phi(\epsilon)$ calls to \textsc{Advance} have been performed. 
     
    \item[Operation archiving:] All updates are placed into a queue $Q$ as they arrive. This will ensure that the DISQS will take into account all updates of previous rounds. Moreover, if an update $u_i$ deletes an element $v$ of $S$, we wish to report it as being deleted. To do so we set a variable $\Delta_i^{\text{DEL}}=\{v\}$, and otherwise $\Delta_i^{\text{DEL}}=\emptyset$.
    
    \item[Interaction with the DISQS:] During round $k$, we want to use the DISQS in order to compute the set $\hat{I}_{k-1}$. To do that, we first perform to the DISQS all updates of round $k-1$ one by one and remove them from $Q$. This way, DISQS stores the set $S_{r_{k-1}}$. Then, we perform a query to the DISQS, to get $\hat{I}_{k-1}$.
    In each update of the round $\left(1+\frac{2h(\epsilon)}{\beta \epsilon}\right)f(n)$ time is spent on executing these operations.
    
    \item[Maintaining $S$:] We store $S$ in a binary search tree based on some total ordering of the objects. For each update $u_i$, we search for $u_i$ in $S$ and remove it if it is there, and add it if it is not, to maintain $S=S_i=S_{i-1} \oplus \{u_i\}$.


    \item[Output:] After each update, we report the symmetric difference $\Delta$ between previous and current independent set to the user. Let $\Delta_i^{\text{MIX}}$ be the union of the items returned by the \textsc{Advance} operation of the MIX algorithm during the execution of update $u_i$. We combine $\Delta_i^{\text{MIX}}$ and $\Delta_i^{\text{DEL}}$, and before returning, we remove any items that would result in the insertion into $I_i$ of items that are not in $S$. 
    
    
\end{description}

\paragraph{Roadmap.} We need to argue about (i) correctness, (ii) running time, (iii) approximation ratio and (iv) \textit{feasibility} of our algorithm, i.e., that during each round the computation of MIX and DISQS have finished before the round ends. Before this, we will first mention some basic properties that will be helpful in the analysis. 

\paragraph{Basic Properties.} We first mention some basic properties that will be helpful in the analysis. Although the intuition is clear, some of the proofs are long, and we defer them to Appendix~\ref{app:deam}, to keep the proof focused.

Let $\OPT$ be the size of the current optimum. 
First, let us explore the relations between these $\hat{I}_{k}$ and $\OPT_{r_k}$ in various rounds. We use $t$ to denote the current round and $k$ to denote an arbitrary round. First, since the DISQS, outputs a $\beta$-approximate independent set, we have that  
\begin{equation}
\label{eq:ab}
\OPT_{k} \geq   |\hat{I}_{k}|  \geq \beta \OPT_{r_k}.
\end{equation}
We proceed with the following claim.
\begin{restatable}{claim}{rclaim}
\label{claim:relation}
For any round $k$, we have that $g(\epsilon) \cdot \OPT_{r_{k-1}} \leq  \OPT_{r_{k}} \leq h(\epsilon) \cdot  \OPT_{r_{k-1}}$.
\end{restatable}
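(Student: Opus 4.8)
\textbf{Proof plan for Claim~\ref{claim:relation}.}

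The plan is to show that the round lengths $R_k$ are small enough that $\OPT$ cannot change by too much within a round, and then chain these bounds across the two rounds separating $r_{k-1}$ from $r_k$. Recall that $R_k = \max\{1, \lfloor \epsilon\,|\hat I_{k-2}|\rfloor\}$, so a round has length at most $1 + \epsilon\,|\hat I_{k-2}|$. By Fact~\ref{f:one}, $\OPT$ changes by at most one per update, hence over a whole round $k$ we have $|\OPT_{r_k} - \OPT_{r_{k-1}}| \le R_k \le 1 + \epsilon\,|\hat I_{k-2}|$. Using \eqref{eq:ab} to bound $|\hat I_{k-2}| \le \OPT_{r_{k-2}}$, this becomes a bound in terms of $\OPT_{r_{k-2}}$, which is ``two rounds back''. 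So the first step is to turn this into a clean self-referential inequality relating consecutive $\OPT$ values at round boundaries.

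The key technical point is that the recurrence $\OPT_{r_k} \le \OPT_{r_{k-1}} + 1 + \epsilon\,\OPT_{r_{k-2}}$ (and the symmetric lower bound $\OPT_{r_k} \ge \OPT_{r_{k-1}} - 1 - \epsilon\,\OPT_{r_{k-2}}$) forces the ratio $\OPT_{r_k}/\OPT_{r_{k-1}}$ into the interval $[g(\epsilon), h(\epsilon)]$. I would prove this by induction on $k$. Assuming $\OPT_{r_{k-1}}/\OPT_{r_{k-2}} \in [g(\epsilon), h(\epsilon)]$, substitute $\OPT_{r_{k-2}} \le \OPT_{r_{k-1}}/g(\epsilon)$ into the upper recurrence to get $\OPT_{r_k} \le \OPT_{r_{k-1}}(1 + \epsilon/g(\epsilon)) + 1$; one checks that $1 + \epsilon/g(\epsilon) \le h(\epsilon)$ precisely because $g$ and $h$ are the two roots of $x^2 - x(1-\epsilon)\cdot(\text{something})$—more concretely, $g(\epsilon)$ and $h(\epsilon)$ are designed so that $g(\epsilon) + h(\epsilon) = 2$ and $g(\epsilon)\cdot h(\epsilon) = 1 - \epsilon$, which gives $h(\epsilon) = 1 + \epsilon/g(\epsilon)$ and $g(\epsilon) = 1 - \epsilon/h(\epsilon)$ directly. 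The additive $+1$ is absorbed into the $o(\cdot)$ slack (or handled by noting the bound is only interesting when $\OPT$ is large, while for bounded $\OPT$ the claim can be arranged to hold trivially via the $\max\{1,\cdot\}$ in the round length definition). The lower bound is symmetric: $\OPT_{r_k} \ge \OPT_{r_{k-1}} - \epsilon\,\OPT_{r_{k-2}} - 1 \ge \OPT_{r_{k-1}}(1 - \epsilon/g(\epsilon)) - 1$, wait—here I must be careful with the direction, using instead $\OPT_{r_{k-2}} \le \OPT_{r_{k-1}}/g(\epsilon)$ again, giving $\OPT_{r_k} \ge \OPT_{r_{k-1}}(1 - \epsilon/(g(\epsilon))) - 1$; and $1 - \epsilon/g(\epsilon)$ relates to $g(\epsilon)$ after checking the defining identities.

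I expect the main obstacle to be handling the additive constant $+1$ coming from the $\max\{1, \lfloor\epsilon|\hat I_{k-2}|\rfloor\}$ and the floor, which spoils a naive multiplicative induction when $\OPT$ is small. The cleanest fix is to observe that the entire worst-case machinery only needs the claim asymptotically (the theorem statement already tolerates an $o(\OPT)$ additive loss), so one can either (a) prove the claim with an extra additive $O(1)$ term and check this is harmless downstream, or (b) note that when $\OPT_{r_{k-2}} = O(1/\epsilon)$ all rounds have length $\Theta(1)$ and the data structure can afford to recompute from scratch, so WLOG $\OPT$ is large enough that $\epsilon\,|\hat I_{k-2}| \ge 1$ and the floor costs only a factor $(1+o(1))$. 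A secondary, purely mechanical obstacle is verifying the algebraic identities $g+h=2$, $gh=1-\epsilon$ from the closed forms $g(\epsilon) = \frac{1+\sqrt{1-4\epsilon}}{2}$, $h(\epsilon) = \frac{3-\sqrt{1-4\epsilon}}{2}$ (immediate: the sum telescopes the square roots, and the product is $\frac{(1+s)(3-s)}{4}$ with $s=\sqrt{1-4\epsilon}$, equal to $\frac{3+2s-s^2}{4} = \frac{3+2s-(1-4\epsilon)}{4}$—hmm, that is not quite $1-\epsilon$, so I would instead directly verify $h(\epsilon) = 1 + \epsilon/g(\epsilon)$ by clearing denominators), and then confirming the induction closes with these identities in hand.
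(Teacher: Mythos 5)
Your plan is correct and follows essentially the same route as the paper's proof: an induction over rounds that combines Fact~\ref{f:one} ($|\OPT_{r_k}-\OPT_{r_{k-1}}|\le R_k$), the bound $|\hat{I}_{k-2}|\le \OPT_{r_{k-2}}$ from~\eqref{eq:ab}, the inductive hypothesis $\OPT_{r_{k-2}}\le \OPT_{r_{k-1}}/g(\epsilon)$, and the defining identities $h(\epsilon)=1+\epsilon/g(\epsilon)$, $g(\epsilon)=1-\epsilon/g(\epsilon)$ (the paper handles the $R_k=1$ corner case by a short separate case split rather than your additive-slack/WLOG-large-$\OPT$ workaround, but this is the same idea). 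One small caution: your intermediate claims $g(\epsilon)h(\epsilon)=1-\epsilon$ and $g(\epsilon)=1-\epsilon/h(\epsilon)$ are false (in fact $g(\epsilon)h(\epsilon)=g(\epsilon)+\epsilon$), but you catch this yourself and the inequalities you actually substitute use the correct identities, so the argument goes through as in the paper.
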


Note this implies that $(1-\epsilon-O(\epsilon^2)) \cdot \OPT_{r_{k-1}} \leq  \OPT_{r_k} \leq (1+\epsilon+O(\epsilon^2))  \cdot  \OPT_{r_{k-1}}$, i.e., the optimal cost is changing by a bounded amount between consecutive rounds. Proof is in Appendix~\ref{app:deam}. By the same logic, we get a similar claim on the change of the optimal cost between subsequent rounds. At any time during current round $t$ we have that 
\begin{equation} \label{eq:opt}
g^{\ell}(\epsilon) \cdot \OPT_{r_{t-\ell}} \leq  \OPT \leq h^{\ell}(\epsilon) \cdot \OPT_{r_{t-\ell}}
\end{equation}

As a corollary, we get that at any time during current round $t$,

\begin{equation}
\label{eq:S_t}
|\hat{I}_{t-2}|+|\hat{I}_{t-3}| \leq \OPT_{r_{t-2}}+\OPT_{r_{t-3}} 
\leq \frac{\OPT}{g^2(\epsilon)}+\frac{\OPT}{g^3(\epsilon)}
\leq \frac{2\OPT}{g^3(\epsilon)}.
\end{equation}

\paragraph{Correctness.} It is easy to see that the algorithm described above always outputs an independent set to the user: During current round $t$, the user always sees  $\MIX(\hat{I}_{t-3},\hat{I}_{t-2},j)$ for some $j$, with perhaps some items that have been deleted in round $t-1$ or the current round removed. MIX is by definition an independent set at every step as in any subset of it, so the user always sees an independent set.


\paragraph{Bound on update set size.} 
The update set returned is a subset of $\Delta_i^{\text{MIX}} \cup \Delta_i^{\text{DEL}}$, as insertions of $\Delta_i^{\text{MIX}}$ might be removed if the items are no longer in $S$. The size of $\Delta_i^{\text{MIX}}$ is at most $\phi(\epsilon)$ by construction; the set $\Delta_i^{\text{DEL}}$ contains at most one element. Therefore,  

\begin{equation} \label{eq:updateset}
\Delta \leq \phi(\epsilon)+1
= \frac{16 h^2(\epsilon)}{\epsilon \cdot \beta \cdot g^3(\epsilon)} +1
\leq \frac{16}{\epsilon \cdot \beta} \cdot \frac{(3/2)^2}{(1/2)^3} +1
=  
  O\left(\frac{1}{\epsilon \cdot \beta}\right) = O_{\epsilon,\beta}(1)
 \end{equation}

%

\paragraph{Running time.} We now argue about the running time of a single update. 
\begin{itemize}
    \item The running time due to the execution of MIX is $$\phi(\epsilon) \cdot \gamma (|\hat{I}_{t-3}|+|\hat{I}_{t-2}|) \leq \phi(\epsilon) \cdot  \gamma\left(\frac{2\OPT}{g^3(\epsilon)}\right)  \leq \phi(\epsilon) \cdot  \gamma(16 \cdot n) = O_{\epsilon}(\gamma(n)),$$ where the first inequality holds due to~\eqref{eq:S_t}, the second due to  $g^3(\epsilon) \geq (1/2)^3 = 1/8 $ and $\OPT \leq n$  and the equality at the end holds since $\gamma$ is sublinear.
    
    
    \item The running time due to the interaction with the DISQS is $$(1+\frac{2h(\epsilon)}{\beta \cdot \epsilon}) f(n) \leq \frac{4h(\epsilon)}{\beta \cdot \epsilon} \cdot f(n)  = O_{\epsilon,\beta}(f(n))$$
    \item There are also the operations in the binary search trees holding $I$ and $S$, for updating $S$ and checking whether objects output by MIX have been deleted. Those operations cost $O(\log n)$ each (as $|I|\leq|S|\coloneqq n$). The number of such operations is linear in the update set size, which by~\eqref{eq:updateset} is $O_{\epsilon,\beta}(1)$. Therefore their total cost is  $O_{\epsilon,\beta}(\log n)$.
\end{itemize}
\textit{Putting everything together:} Overall, we get that the total update time is upper bounded by $O_{\epsilon,\beta}(f(n)+\gamma(n)+ \log n)$.

\paragraph{Approximation ratio.}  Now we argue the approximation factor. First, we bound the number of items at any step of MIX during current round $t$:
\begin{align*}
& |\MIX(\hat{I}_{t-2},\hat{I}_{t-3},j)| \\  &  \geq \min(|\hat{I}_{t-2}|,|\hat{I}_{t-3}|)-\Gamma(|\hat{I}_{t-2}|+|\hat{I}_{t-3}|)& \text{Defn of MIX}
\\ & \geq \min(\beta \OPT_{i-2},\beta \OPT_{i-3}) 
-\Gamma\left( |\hat{I}_{t-2}|+|\hat{I}_{t-3}|\right)
& \text{From \eqref{eq:ab}}
\\
& \geq \min\left(\frac{\beta}{h^2(\epsilon)} \OPT,\frac{\beta}{h^3(\epsilon)} \OPT \right)
-\Gamma\left(\frac{1}{g^2(\epsilon)} \OPT+\frac{1}{g^3(\epsilon)}  \OPT\right)
& \text{From~\eqref{eq:opt},~\eqref{eq:S_t}}\\
& \geq \frac{\beta}{h^3(\epsilon)} \OPT 
-\Gamma\left(\frac{2}{g^3(\epsilon)}  \OPT\right)
\\
& \geq \frac{\beta}{h^3(\epsilon)} \OPT 
-o(\OPT), & \text{Since $\Gamma(n)=o(n)$}
\end{align*}
The items that have been deleted are from the round $t-1$ or the current round, and thus are at most:
\begin{align*}
 R_t+R_{t-1} &= \max \lbrace 1, \lfloor \epsilon \cdot |\hat{I}_{t-2}|\rfloor \rbrace + \max \lbrace 1, \lfloor \epsilon \cdot |\hat{I}_{t-3}|\rfloor \rbrace \leq \epsilon (|\hat{I}_{t-2}|+|\hat{I}_{t-3}|) + 2
 \\
& \leq \epsilon \left( \frac{1}{g^2(\epsilon)} \OPT+\frac{1}{g^3(\epsilon)} \OPT  \right) + 2 & \text{Due to~\eqref{eq:S_t}}
\\& \leq  \frac{2\epsilon}{g^3(\epsilon)}  \OPT + 2
\end{align*}
Combining, the independent set is of size at least

$$\left( \frac{\beta}{h^3(\epsilon)} - \frac{2\epsilon}{g^3(\epsilon)} \right)\OPT 
 - o(\OPT) $$
For any $\epsilon>0$, by the definition of $g$ and $h$, there is an $\epsilon'$ such that $\frac{\beta}{h^3(\epsilon)} - \frac{2\epsilon}{g^3(\epsilon)}\geq \beta -\epsilon'$. This gives the independent set size as stated in the theorem.

\paragraph{Feasibility: MIX and interaction with the DISQS complete.} We now show that our algorithm is feasible, i.e., the required updates of MIX and DISQS can be performed during a round while maintaining the promised update time. 

\textit{Interaction with DISQS.} First we consider the interaction with the DISQS. Overall, during current round $t$, the interaction with DISQS consists of performing the updates of round $t-1$ and querying the DISQS to output an independent set.   The total running time is therefore at most

\begin{align*}
(|\hat{I}_{t-1}|+R_t)f(n)
&\leq (\OPT_{r_{t-1}}+R_t)f(n)
\\&\leq (h(\epsilon)\OPT_{r_{t-2}}+R_t)f(n)  
& \text{Claim~\ref{claim:relation}}
\\&\leq \left(\frac{h(\epsilon)}{\beta}|\hat{I}_{t-2}|+R_t\right)f(n)
&\text{$\hat{I}_{t-2}$ is a $\beta$-approx of $\OPT_{r_{t-2}}$}
\\&= \left(\frac{h(\epsilon)}{\beta \epsilon}\cdot \epsilon |\hat{I}_{t-2}|+R_t\right)f(n)
\\&\leq \left(\frac{h(\epsilon)}{\beta \epsilon}\cdot 2 R_t+R_t\right)f(n)  & \text{Using $\epsilon |\hat{I}_{t-2}| \leq 2 R_t$}
\\&= \left(1+\frac{2h(\epsilon)}{\beta \epsilon}\right)  f(n) R_t
\end{align*}
As we spend $\left(1+\frac{2h(\epsilon)}{\beta \epsilon}\right)f(n)$ time on this per operation, after $R_t$ operations it will complete.

\textit{The MIX operation completes:} The last part of the proof is to show that indeed during each current round $t$ the MIX operation has enough time to complete and calculate $\hat{I}_{t-2}$.

Our basic logic is as follows: Suppose you have a sequence of operations to execute that has $b$ breakpoints. Then the sequence will be completed after $b+1$ execution chunks. Thus, to show that the MIX operation completes during round $t$, we have to count its breakpoints operation and show that they are $b \leq R_t-1$. 

Recall that in each step we run MIX until either 
$\phi(\epsilon) \cdot \gamma\left(|\hat{I}_{t-3}|+|\hat{I}_{t-2}|\right)$
time has passed or $ \phi(\epsilon)$ items are output by MIX. We treat those two cases separately and then put everything together. In the subsequent calculations we use that $\phi(\epsilon) = \frac{16 h^2(\epsilon)}{\epsilon \cdot \beta \cdot g^3(\epsilon)} \geq 16$, since all of the factors $\frac{1}{\beta},\frac{1}{\epsilon},\frac{h^2(\epsilon)}{g^3(\epsilon)}$ evaluate to at least one, for any $0 < \epsilon < 1/4$. 

\textit{Breakpoints due to the time constraint:} The chunks that are interrupted due to the first (time) constraint have size $\lfloor \phi(\epsilon) \cdot 
\gamma(|\hat{I}_{t-3}|+|\hat{I}_{t-2}|) \rfloor \geq \frac{\phi(\epsilon)}{2} \cdot 
\gamma(|\hat{I}_{t-3}|+|\hat{I}_{t-2}|)$ . 

Recall that by definition of MIX, initializing MIX with $\hat{I}_{t-3}$ and $\hat{I}_{t-2}$ and performing $k$ calls, takes $k \cdot \gamma(|\hat{I}_{t-3}|+|\hat{I}_{t-2}|)$ time. Since $k \leq |\hat{I}_{t-3}|+|\hat{I}_{t-2}| $, using~\eqref{eq:S_t} we get that the total time of MIX is at most $\tau = \frac{2\OPT}{g^3(\epsilon)} \cdot \gamma(|\hat{I}_{t-3}|+|\hat{I}_{t-2}|)$. 

Therefore, the total time of MIX operation is upper bounded by $\tau$ and the size of chunks interrupted due to time constraints is at least $\frac{\phi(\epsilon)}{2} \cdot 
\gamma (|\hat{I}_{t-3}|+|\hat{I}_{t-2}|)$. This implies that the number of breakpoints due to time constraints is at most 

\begin{equation}
\label{eq:time_const}
\frac{\tau}{\frac{\phi(\epsilon)}{2} \cdot 
\gamma(|\hat{I}_{t-3}|+|\hat{I}_{t-2}|)} = \frac{\frac{2\OPT}{g^3(\epsilon)} \cdot \gamma(|\hat{I}_{t-3}|+|\hat{I}_{t-2}|)}{\frac{\phi(\epsilon)}{2} \cdot 
\gamma(|\hat{I}_{t-3}|+|\hat{I}_{t-2}|)} = \frac{4\OPT}{g(\epsilon)^3 \cdot \phi(\epsilon)}
\end{equation}
\textit{Breakpoints due to MIX changes:} The chunks that are interrupted due to the second (MIX) constraint have size $\lfloor \phi(\epsilon) \rfloor \geq \phi(\epsilon)/2$. Note that MIX will output at most  $|\hat{I}_{t-2}|+|\hat{I}_{t-3}| \leq \frac{2\OPT}{g(\epsilon)^3}$ items.  Therefore, the number of breakpoints due to MIX is at most 

\begin{equation}
\label{eq:mix_constr}
\frac{\frac{2\OPT}{g(\epsilon)^3}}{\phi(\epsilon)/2}    = \frac{4\OPT}{g(\epsilon)^3 \cdot \phi(\epsilon)}
\end{equation}
\textit{Putting everything together:} By ~\eqref{eq:time_const} and~\eqref{eq:mix_constr}, we get that the total number of breakpoints is at most 

\begin{align*}
\frac {8\OPT}{g(\epsilon))^3 \cdot \phi(\epsilon)}
&=  \frac{ 8\OPT } { g(\epsilon))^3 \cdot \frac{16 h^2(\epsilon)}{\epsilon \cdot \beta \cdot g^3(\epsilon)}} = \frac{\epsilon \cdot \beta \cdot \OPT}{2 h^2(\epsilon) }
\\&\leq \frac{\epsilon \cdot \beta \cdot \OPT_{t-2}}{2}   & \text{Since $\frac{\OPT}{h^2(\epsilon)} \leq \OPT_{t-2}$ by ~\eqref{eq:opt}}
\\& \leq \frac{\epsilon |\hat{I}_{t-2}|}{2}  & \text{ Using $ \hat{I}_{t-2 }\geq \beta \cdot \OPT_{t-2}$}
\\& \leq R_t & \text{Using $\epsilon \cdot |\hat{I}_{t-2}|\leq 2\cdot R_t$} 
\end{align*}
Thus the mix operation will finish during the round.

\end{proof}

\subsection{Necessity of the MIX function}
\label{sec:deam-lb}


\begin{theorem} \label{t:nomix}
Suppose for any $n$, there are independent sets $A$ and $B$ of size $n$ such there is no
MIX function with $\Gamma(n) \leq (1-\beta)n$ 
for all $A'\subseteq A$ and $B' \subseteq B$, where $|A'|,|B'| \geq \beta n$.
Then there is no $(\beta +\epsilon)$-approximate dynamic MIS algorithm that reports at most $o(n)$ changes in the independent set per update, for any $\epsilon>0$.
\end{theorem}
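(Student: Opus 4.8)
The plan is to prove the contrapositive: from a $(\beta+\epsilon)$-approximate fully dynamic MIS data structure $\mathcal{D}$ whose \textsc{Update} reports at most $s(n)=o(n)$ changes per update, I will manufacture, for the hard pair $(A,B)$ promised by the hypothesis, a MIX function between subsets $A'\subseteq A$, $B'\subseteq B$ with $|A'|,|B'|\ge\beta n$ and deficiency $\Gamma<(1-\beta)n$, contradicting the hypothesis. (We may assume $A\cap B=\emptyset$; a nonempty overlap only gives extra slack in what follows.) The first step is to \emph{generate a trajectory}: insert the $n$ objects of $A$ one at a time; since $A$ is independent, $\OPT=n$, so the reported set $J_0\subseteq A$ has $|J_0|\ge(\beta+\epsilon)n$. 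Then insert the $n$ objects of $B$ one at a time; throughout this phase $A\subseteq S$, so $\OPT\ge n$ and the reported set always has size $\ge(\beta+\epsilon)n$. Finally delete the $n$ objects of $A$; throughout, $B\subseteq S$, so again $\OPT\ge n$; when the block ends $S=B$ and the reported set is $J_2\subseteq B$ with $|J_2|\ge(\beta+\epsilon)n$. Let $P_0=J_0,P_1,\dots,P_{2n}=J_2$ be the reported sets after the successive updates of the ``insert $B$ / delete $A$'' block: these are independent, each of size $\ge(\beta+\epsilon)n$, with $|P_t\oplus P_{t+1}|\le s(n)$.

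Next I \emph{refine to unit steps}: replace each coarse jump $P_t\to P_{t+1}$ by first deleting the elements of $P_t\setminus P_{t+1}$ one at a time and then inserting those of $P_{t+1}\setminus P_t$ one at a time. Every set produced is a subset of $P_t$ or of $P_{t+1}$, hence independent, and has size at least $(\beta+\epsilon)n-s(n)$, which exceeds $\beta n$ for all large $n$. This yields a sequence $Q_0=J_0,Q_1,\dots,Q_M=J_2$ of independent sets with $|Q_j\oplus Q_{j+1}|=1$, each of size $\ge(\beta+\epsilon)n-s(n)$, and $M\le 2n\,s(n)$.

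The remaining step — \emph{turning $Q_\bullet$ into a bona-fide MIX function} — is the crux, and I expect essentially all of the work to lie here. The sequence $Q_\bullet$ already transitions in unit steps through independent sets that never get small; what it fails is that its length $M$ typically exceeds $|A'|+|B'|$ and it is not ``monotone'' — it may delete an object and later re-insert it, and it may pass through objects lying outside any fixed $A'\cup B'$ — whereas for disjoint $A',B'$ a MIX function of length exactly $|A'|+|B'|$ is forced to be monotone. My plan is: already during the trajectory generation, insert auxiliary \textsc{Update}s; in particular, delete $A\setminus J_0$ as soon as $J_0$ is known, so that thereafter $S\subseteq J_0\cup B$ and every reported set has its $A$-part inside $A':=J_0$. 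Then, using the $o(n)$-change bound to limit how fast $\mathcal{D}$'s $B$-part can drift, confine the relevant $B$-objects to a single independent set $B'\subseteq B$ with $|B'|\ge\beta n$ — this confinement of the $B$-side is the delicate point, since a naive projection of $Q_\bullet$ onto $J_0\cup J_2$ can lose too much size, and resolving it cleanly (perhaps by a second, ``pinned'' run of $\mathcal{D}$ and a careful accounting of the approximation loss against the $\epsilon$-slack) is where the argument must do real work. Granting the confinement, one orders the removals of $A'\setminus B'$ and the additions of $B'\setminus A'$ by the \emph{last} time each object changes status along $Q_\bullet$, and interleaves them to shadow $Q_\bullet$: every resulting set is independent because a surviving $a\in A'$ and an already-inserted $b\in B'$ were present together in some common $Q_j$, and its size falls below that of the shadowed $Q_j$ by only $O(s(n))=o(n)$. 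Since $\min(|A'|,|B'|)\le n$ and the monotone path stays of size $\ge(\beta+\epsilon)n-o(n)$, its deficiency is at most $n-\big((\beta+\epsilon)n-o(n)\big)=(1-\beta-\epsilon)n+o(n)<(1-\beta)n$ for $n$ large; padding the monotone path to length exactly $|A'|+|B'|$ by $|A'\cap B'|$ harmless delete-then-reinsert pairs at the $B'$ end gives a genuine MIX function, contradicting the hypothesis and proving the theorem.
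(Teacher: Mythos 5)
Your first two paragraphs are, in substance, the paper's entire proof of Theorem~\ref{t:nomix}: feed the hypothetical $(\beta+\epsilon)$-approximate structure the sequence ``insert $A$, then insert $B$, then delete $A$''; since $\OPT\geq n$ throughout the second and third blocks (all of $A$ or all of $B$ is present), every reported set has size at least $(\beta+\epsilon)n$, and splitting each update set of size at most $o(n)$ into unit steps, deletions before insertions, yields independent intermediate sets of size at least $(\beta+\epsilon)n-o(n)\geq \beta n$ for $n$ large. The paper stops exactly there: it treats this unit-step trajectory --- running from a set $A'\subseteq A$ of size $\geq\beta n$ to a set $B'\subseteq B$ of size $\geq\beta n$, with deficiency at most $n-\beta n=(1-\beta)n$ --- as the forbidden MIX, and makes no attempt to meet the exact-length (hence monotone, hence confined-to-$A'\cup B'$) requirement of the formal MIX definition. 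Note that the intended application (Lemma~\ref{l:noapprox}) only needs this weaker ``one-element-at-a-time transition that never gets small'' property: in the hashtag construction any such transition inside $A\cup B$ must pass through the empty set, which already gives the contradiction.

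The portion of your argument that goes beyond this --- monotonizing the trajectory into a literal MIX of length $|A'|+|B'|$ --- is precisely where your write-up is incomplete, and you flag this yourself: the confinement of the $B$-side to a fixed $B'\subseteq B$ with $|B'|\geq\beta n$ is only hypothesized (``Granting the confinement\dots''), and without it the shadow set (surviving $A'$-objects together with already-inserted $B'$-objects) can be far smaller than the shadowed $Q_j$, whose bulk may consist of objects outside $A'\cup B'$; your $A$-side trick of deleting $A\setminus J_0$ immediately has no obvious $B$-side counterpart, since $J_2$ is only known at the end of the run. Moreover the claimed $O(s(n))$ size loss of the shadowed monotone path is asserted rather than proved --- over the up to $2n\cdot s(n)$ unit steps an object may change status many times --- although your independence argument via last-status-change times is sound. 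So, as submitted, the proof is conditional at its self-declared crux. The simplest repair is to do what the paper does and use the refined trajectory itself as the impossible object (equivalently, state the hypothesis in terms of the weaker transition property, which is all the lower-bound applications use), rather than extracting a monotone MIX from the data structure's behaviour.
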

\begin{proof}
Let $\delta(n)=o(n)$ and suppose there is a $(\beta-\epsilon)$-approximate MIS algorithm that reports $\delta(n)$ changes per update in the worst case.
Insert $A$ in to the data structure. Then insert $B$ and delete $A$. This is $|A|+|B|$ update operations. At all times the independent set is at least $(\beta-\epsilon) n$, and there are at most $\delta(n)$ changes per update operation. We could transform this into a MIX function by taking the at most $\delta(n)$ changes from update and report each change one at a time, first deletes and then inserts; thus at each step of the resultant MIX, their independent set is at least $(\beta + \epsilon) n - \delta(n)$ which is at least $\beta n$ for sufficiently large $n$
\end{proof}

We can apply this to the case of rectangles in the plane, where we show that with a non-trivial worst-case performance of $o(n)$ changes in the independent set per operation,  it is impossible to have an $\beta$-approximate MIS for any $\beta>0$.

\begin{lemma} \label{l:noapprox}
There is no MIX function for rectangles with $\Gamma(n)<n$. 
Thus from Theorem~\ref{t:nomix}, for any $\beta>0$, there is no $\beta$-approximate dynamic MIS algorithm that reports at most $o(n)$ changes in the independent set per update.
\end{lemma}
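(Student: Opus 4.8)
\textbf{Proof proposal for Lemma~\ref{l:noapprox}.}
The plan is to exhibit, for every $n$, two independent sets $A$ and $B$ of rectangles, each of size $n$, such that no MIX function can transition from $A$ to $B$ while keeping every intermediate set an independent set of size more than $0$ (equivalently, $\Gamma(n)<n$ is impossible) --- and moreover that this obstruction persists for \emph{all} subsets $A'\subseteq A$, $B'\subseteq B$ of size $\geq\beta n$ for any $\beta>0$. Combined with Theorem~\ref{t:nomix}, this immediately yields the stated non-existence of any $\beta$-approximate dynamic MIS algorithm with $o(n)$ changes per update.

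The construction I would use is the classic ``grid of thin rectangles'' lower bound for rectangle intersection graphs. Let $A$ consist of $n$ pairwise-disjoint \emph{horizontal} rectangles, stacked vertically: the $i$-th is a very wide, very short rectangle spanning the full horizontal extent $[0, n+1]$ at height band $[i-\tfrac14, i+\tfrac14]$. Let $B$ consist of $n$ pairwise-disjoint \emph{vertical} rectangles, placed side by side: the $j$-th is very tall and very thin, spanning the full vertical extent $[0, n+1]$ at horizontal band $[j-\tfrac14, j+\tfrac14]$. The key geometric fact is that \emph{every} rectangle of $A$ intersects \emph{every} rectangle of $B$ (a full-width horizontal strip and a full-height vertical strip must cross inside the bounding box). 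Hence the bipartite ``conflict'' structure between $A$ and $B$ is a complete bipartite graph: any independent set that is a subset of $A\cup B$ contains elements of $A$ only, or elements of $B$ only.

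Now consider any candidate MIX sequence $\MIX(A,B,0)=A,\ \MIX(A,B,1),\dots,\MIX(A,B,2n)=B$, where consecutive sets differ by one element and each set is independent. I would argue by a discrete intermediate-value argument: define $a_t = |\MIX(A,B,t)\cap A|$ and $b_t=|\MIX(A,B,t)\cap B|$; each intermediate set lies inside $A\cup B$ (no other rectangles are ever available), so by the complete-bipartiteness above, for every $t$ at least one of $a_t,b_t$ is zero. Initially $a_0=n,\,b_0=0$; finally $a_{2n}=0,\,b_{2n}=n$. Since each step changes $a_t+b_t$ by exactly one, and the ``active side'' must switch from $A$ to $B$, there is a step $t^\star$ where $\MIX(A,B,t^\star)=\emptyset$ (the last $A$-element must be removed before the first $B$-element is added, as they conflict). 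At that step $|\MIX(A,B,t^\star)|=0 \leq \min(|A|,|B|)-\Gamma(2n)$ forces $\Gamma(2n)\geq n$, i.e.\ $\Gamma(n)\geq n/2$; rescaling (or taking $|A|=|B|=n$ so the argument domain is $2n$, then renaming) gives $\Gamma(n)<n$ is impossible. For the subset-robust form needed by Theorem~\ref{t:nomix}: any $A'\subseteq A$ and $B'\subseteq B$ still have the complete-bipartite conflict property, so the identical argument shows every MIX on $(A',B')$ has an empty intermediate set, hence needs $\Gamma\geq\min(|A'|,|B'|)\geq\beta n > (1-\beta)n$ fails only if $\beta\le 1/2$... so one should instead simply note the intermediate set is empty, giving $\Gamma(|A'|+|B'|)\geq\min(|A'|,|B'|)$, which for $|A'|=|B'|=\beta n$ exceeds $(1-\beta)n$ whenever $\beta n > (1-\beta)n$, and for the remaining range one enlarges $A,B$ to size $n/\beta$ at the outset. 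Invoking Theorem~\ref{t:nomix} then completes the proof.

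The main obstacle is the quantifier structure demanded by Theorem~\ref{t:nomix}: it is not enough that MIX fails on $A$ and $B$ themselves; it must fail on \emph{all} large subsets, and the bound $\Gamma(n)\le(1-\beta)n$ must be violated for arbitrarily small $\beta>0$. The clean way around this is precisely the complete-bipartite property of the construction, which is \emph{hereditary} --- it survives passing to subsets --- so the ``empty intermediate set'' conclusion is automatic for every $A',B'$; then one only needs to check the arithmetic $\min(|A'|,|B'|) > (1-\beta)(|A'|+|B'|)$, which holds as soon as the two subsets are (roughly) balanced, and balanced subsets of size $\geq\beta n$ can always be found. I would spend a sentence making sure the ``last $A$-element is removed before the first $B$-element is inserted'' claim is airtight (it follows because a set containing one $A$-rectangle and one $B$-rectangle is not independent), since that single-step observation is the entire crux of the IVT argument.
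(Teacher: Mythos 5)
Your proposal is correct and takes essentially the same route as the paper: the paper's proof uses the identical grid (``hashtag'') construction of thin horizontal rectangles $A$ and thin vertical rectangles $B$ with every element of $A$ intersecting every element of $B$, and argues exactly as you do that no element of $B$ can be added until all of $A$ is removed, forcing an empty intermediate set in any MIX. Your extra discussion of the subsets $A'\subseteq A$, $B'\subseteq B$ and the $(1-\beta)n$ arithmetic is care that the paper leaves implicit (it relies only on the hereditary complete-bipartite conflict structure when invoking Theorem~\ref{t:nomix}), so it is a faithful, slightly more detailed rendering of the same argument.
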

\begin{proof}
This is equivalent to saying that there are sets of rectangles $A$ and $B$ such that for any MIX function, there is an $i$ such that $\MIX(A,B,i)=\emptyset$.
Consider sets of rectangles $A$ and $B$, each of size $n$, in the form of a grid such that $A$ are horizontally thin and disjoint, $B$ are vertically thin and disjoint, and every rectangle of $A$ intersects all rectangles of $B$. In a MIX function, starting from $A$, one can not add a single element from $B$ until all elements of $A$ have been removed. 
\end{proof}

This construction works for any class of objects for such a generalized "hashtag" construction is possible, which includes any class of shapes which are connected and where for any rectangle, there is a shape in the class that has that rectangle as its minimum orthogonal bounding rectangle. This includes natural classes of shapes without fatness constraints, such as triangles, ellipses, polygons, etc.
\end{onlymain}

\begin{onlyapp}

In this section, we  provide the missing details from the proof of Theorem~\ref{th:deam} from Section~\ref{s:deam}. In particular, we prove Claim~\ref{claim:relation}, which is the only remaining piece towards proving Theorem~\ref{th:deam}.

\paragraph{Basic Properties.} First let us mention some more properties that follow the definition of our algorithm. Recall that  $t$ denotes the current round and $k$ denotes an arbitrary round. Moreover, $\OPT$ denotes the size of the optimal solution at any arbitrary time, during current round $t$.

\begin{align}
    \OPT_{r_k} & \leq  \OPT_{r_{k-1}} + R_k \label{eq:ac} \\
    \OPT_{r_k} & \geq \OPT_{r_{k-1}} -R_k \label{eq:ad}\\
    \OPT & \leq  \OPT_{r_t} + R_t \label{eq:ae} \\
    \OPT & \geq \OPT_{r_t} -R_t \label{eq:af}
\end{align}

The first is by definition of the size of a round. \eqref{eq:ac} and \eqref{eq:ad} follow from Fact~\ref{f:one}, which states that OPT can only change by 1 per operation. The last two hold when $k$ is the most recent round by the same logic. 

\paragraph{Proof of Claim~\ref{claim:relation}.} Now, we are ready to prove Claim~\ref{claim:relation}, which we first restate.

\rclaim*
\begin{proof}
Note that $g(\epsilon) = 1 - \frac{\epsilon}{g(\epsilon)}$ and $h(\epsilon) = 1+ \frac{\epsilon}{g(\epsilon)}$.

\paragraph{First inequality.} To prove the first inequality, $g(\epsilon) \cdot \OPT_{r_{k-1}} \leq  \OPT_i $, we use induction. Recall that $R_k= \lfloor \epsilon \cdot |\hat{I}_{k-2}|\rfloor$. 
Assume that $R_k = \lfloor \epsilon \cdot |\hat{I}_{k-2}|\rfloor$. We have that 
\begin{align*}
\OPT_{r_k} 
&\geq \OPT_{r_{k-1}}-R_k & \eqref{eq:ad}\\
&\geq \OPT_{r_{k-1}}-\epsilon |\hat{I}_{k-2}| & \text{Since $ R_k= \lfloor \epsilon \cdot |\hat{I}_{k-2}|\rfloor$}\\
&\geq \OPT_{r_{k-1}}-\epsilon\OPT_{r_{k-2}} & \eqref{eq:ab}\\
&\geq \OPT_{r_{k-1}}-\epsilon\cdot\frac{\OPT_{r_{k-1}}}{g(\epsilon)} & \text{Induction}
\\
&
\geq \left(1- \frac{\epsilon}{g(\epsilon)} \right)\OPT_{r_{k-1}}
\\
&
= g(\epsilon) \cdot \OPT_{r_{k-1}}  & \text{Since $g(\epsilon) = 1 - \frac{\epsilon}{g(\epsilon)}$}
\end{align*}

Note that in case $R_k = 1$ the inequality holds since $\OPT_{r_k} 
\geq \OPT_{r_{k-1}}-1 \geq g(\epsilon) \OPT_{r_{k-1}}$, given that $\OPT_{r_{k-1}} \geq 2$. 

\paragraph{Second inequality.} We show that the second inequality, $\OPT_{r_{k}} \leq h(\epsilon) \cdot  \OPT_{r_{k-1}}$, follows as a corollary of the first one. 
Again, let us assume that $R_k = \lfloor \epsilon \cdot |\hat{I}_{k-2}|\rfloor$. We have that

\begin{align*}
\OPT_{r_{k}} 
&\leq \OPT_{r_{k-1}}+R_k & \eqref{eq:ac}\\
&\leq \OPT_{r_{k-1}}+\epsilon |\hat{I}_{k-2}| & \text{Since $ R_k= \lfloor \epsilon \cdot |\hat{I}_{k-2}|\rfloor$}\\
&\leq \OPT_{r_{k-1}}+\epsilon\OPT_{r_{k-2}} & \eqref{eq:ab}\\
&\leq \OPT_{r_{k-1}}+\epsilon\cdot\frac{\OPT_{r_{k-1}}}{g(\epsilon)} & \text{Induction}
\\
&
\leq \left(1+ \frac{\epsilon}{g(\epsilon)} \right)\OPT_{r_{k-1}}
\\
&
= h(\epsilon) \cdot \OPT_{r_{k-1}} & \text{Definition of $h(\epsilon)$}
\end{align*}

In case $R_k=1$, then the inequality trivially holds, since $\OPT_{r_k} \leq  \OPT_{r_{k-1}}+1$ which is at most $\OPT_{r_{k-1}} \cdot h(\epsilon)$, given that $\OPT_{r_{k-1}} \geq 2$. 
\end{proof}
\end{onlyapp}
\begin{onlymain}

\section{A MIX algorithm for fat objects}
\label{s:makemix}

In this section we show that our deamortization scheme applies to fat objects, by showing that fat objects have a MIX algorithm with runtime $\gamma(n) = O(\log n)$. 

\paragraph{Fat objects.} 
There are many possible definitions of fat objects in Euclidean space, we use the following one from~\cite{chan2003polynomial}. 
Define the center and size of an object to be the center and side length of one of its minimal enclosing hypercube.

\begin{definition}
\label{defn:fat}
A collection $S$ of (connected) objects is $f$-fat, for a constant $f$, if for any size-$r$ box $R$, there are $f$ points such that every object that intersects $R$ and has size at least $r$ contains one of the chosen points.
\end{definition}

This implies that any box can only intersect $f$ disjoint objects of size larger than the box. 
Throughout the whole section, $f$ and the dimension $d$ are considered to be constant.

\paragraph{Roadmap.} 
We will develop and use a variant of the rectangle separator theorem of Smith and Wormald~\cite{DBLP:conf/focs/SmithW98}. 
We first state the classic version, and then prove the variant we need. 
Our proofs are straightforward adaptations of those in~\cite{DBLP:conf/focs/SmithW98}. 

\begin{lemma}[Smith and Wormald \cite{DBLP:conf/focs/SmithW98}]
For any set $S$ of disjoint squares objects in the plane, there is a separating rectangle $R$ that such
if we partition $S$ into $S^{\IN}$, $S^{\OUT}$ and $S^{ON}$ based on whether each object lies entirely inside $R$, entirely outside $R$, or intersects $R$, $S^{\IN} \leq \frac{2}{3}|S|$,
$S^{\OUT} \leq \frac{2}{3}|S|$ and $S^{ON} = O(\sqrt{|S|})$.
\end{lemma}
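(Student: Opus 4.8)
The statement to be proven is the Smith--Wormald rectangle separator lemma for disjoint squares in the plane, which is a prerequisite for the variant the authors actually need. Let me sketch a proof plan.

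\textbf{Proof plan.}
The plan is to prove the separator lemma by a standard sweep-and-shift argument on a well-chosen family of axis-aligned boxes, following Smith and Wormald~\cite{DBLP:conf/focs/SmithW98}. First I would fix a coordinate axis, say the $x$-axis, and consider the family of vertical strips (or, more symmetrically, the family of axis-aligned rectangles whose left and right boundaries are vertical lines). The key quantitative fact driving the argument is a volume/packing bound: since the squares in $S$ are pairwise disjoint, a vertical line at position $x$ can be crossed by many squares, but the squares it crosses that are ``large'' (side length at least some threshold $\ell$) number at most $O(\mathrm{width}/\ell)$ per unit of vertical extent; more usefully, squares are partitioned by a dyadic size classification, and for each size class $2^j \le s < 2^{j+1}$ the disjointness forces the number crossing a fixed line to be controlled. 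Summing a geometric series over size classes gives that the ``average'' number of squares crossed by a random line from an appropriate distribution is $O(\sqrt{|S|})$.

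\textbf{Main steps.} (1) Normalize so that $S$ lies in a unit square and $|S| = m$. (2) Classify squares by dyadic size; note there are only $O(\log m)$ relevant size classes since squares smaller than roughly $1/m$ in a unit square can be handled trivially or the classification truncated. (3) For a threshold $r$ to be chosen, show that there is a vertical line crossing $O(\sqrt{m})$ squares of size $\ge r$ and also such that at most $\tfrac23 m$ squares lie strictly to each side, OR, if no such balanced line exists, recurse into the majority side; repeat the same for the $y$-axis. The balancing is obtained by a sweeping argument: sweep a vertical line from left to right, the count of squares entirely to its left increases from $0$ to $m$ in unit steps, so there is a position where it is between $\tfrac13 m$ and $\tfrac23 m$; within that ``middle band'' of admissible positions, average the crossing count and invoke step (2) to get a line in the band crossing $O(\sqrt m)$ large squares. (4) The small squares (size $< r$) crossing the chosen line must, by disjointness and the line being a $1$-dimensional object in the band, be handled separately: choose $r$ so that the number of small squares that could possibly straddle any line is itself $O(\sqrt m)$ --- this is where the threshold $r \approx 1/\sqrt m$ (relative to the bounding box) comes in, because along a line segment of length $1$ you can fit at most $1/r$ disjoint squares of size $\ge r$ side by side, and the total number of squares of size $< r$ is still $m$ but only $O(1/r) = O(\sqrt m)$ of them can intersect a given line without overlapping. (5) Conclude that the separating rectangle (obtained by taking the region between a balanced vertical line pair or a vertical and horizontal cut, depending on the shape of the recursion) has $S^{\mathrm{IN}}, S^{\mathrm{OUT}} \le \tfrac23 |S|$ and $S^{\mathrm{ON}} = O(\sqrt{|S|})$.

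\textbf{Expected main obstacle.} The delicate point is coordinating the two requirements on the cut simultaneously: the cut must be \emph{balanced} (middle band) and must have \emph{low crossing number}. Individually each is easy --- balance by the discrete intermediate value argument, low crossing by averaging --- but one must argue that the averaging can be done \emph{restricted to the admissible band} of positions and still yield $O(\sqrt m)$, i.e., that the band is not too short relative to the total crossing ``mass.'' This requires the packing bound from step (4) to be applied locally and a careful bookkeeping of the geometric sum over dyadic size classes; handling the squares larger than the band width (which cannot be ``crossed cheaply'' and instead must be pushed to one side, shrinking the recursion) is the subtle case. I would also need to be slightly careful that iterating the cut (when no single balanced low-crossing line exists and we recurse on a side) terminates with the right bounds --- standard, since each recursion step strictly decreases the instance size and the crossing counts only add up geometrically.
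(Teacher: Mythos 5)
There is a genuine gap, and it sits exactly where you flagged the ``expected main obstacle'' --- but the fix you propose (restrict the averaging to the balanced band, recurse into the majority side) does not work, and one of your packing claims is false as stated. In step (4) you assert that only $O(1/r)=O(\sqrt m)$ squares of side $<r$ can cross a given line ``without overlapping.'' A small square crossing a vertical line occupies only a segment of length equal to its own side, so up to $m$ disjoint squares of side $1/m$ stacked in a column can all cross the same vertical line; disjointness gives the $1/r$ bound only for the \emph{large} squares. The same stacked-column example shows the balanced band can have width as small as the common side length, so averaging the crossing number over the band does not yield $O(\sqrt m)$. Worse, a ``plus'' configuration (a vertical column of $n/2$ squares and a horizontal row of $n/2$ squares meeting near the center, kept disjoint by a small gap) admits \emph{no} balanced axis-parallel line with $o(n)$ crossings at all, so the template ``sweep a line, balance by the intermediate-value argument, then average within the band'' cannot be completed for any choice of axis; the family of candidate cuts itself has to change. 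This is precisely why Smith and Wormald use a closed square/rectangle boundary as the separator rather than a line.

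The correct mechanism --- which is also how the paper proves the variant it actually needs (Lemma~\ref{l:getroot} inside Lemma~\ref{l:fatmix}); note the paper does not reprove the classical statement, it only cites it --- is to decouple balance from crossing number using a one-parameter family of \emph{nested square boundaries} instead of translates of a line. One first finds, by order statistics on representative points (a centerpoint-style step), concentric squares $\square(c,r)$ and $\square(c,2r)$ such that a constant fraction of the objects lies inside the inner square and a constant fraction lies outside the outer one; this secures the $\frac23$-balance for \emph{every} candidate boundary in the annulus. Then one places $\Theta(\sqrt n)$ candidate square boundaries in the annulus, spaced $\Theta(r/\sqrt n)$ apart: each square of side less than the spacing can meet at most one candidate, so by pigeonhole some candidate meets $O(\sqrt n)$ small squares, while the number of large disjoint squares meeting any single candidate boundary is $O(\sqrt n)$ by a perimeter-packing argument (in the paper's fat-object version, by the fatness definition). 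Your dyadic size classes and the large-square packing bound survive in this framework, but the averaging must be done over the nested-squares family, not over line positions in a band, and the balance must come from the two-sided containment argument, not from a sweep.
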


What we need differs from this in that we have two sets of fat objects, in each set the objects are disjoint but intersection is allowed between the two sets, and we want to have the separator intersect with an order-square-root number of objects in each set. However we require the separator to be balanced with respect to the first set only; it is not possible to require balance with respect to both sets. We state the separator lemma here; the proof is the subject of Section~\ref{sec:sep}.

\begin{restatable}{lemma}{lfatmix} \label{l:fatmix}
Let $S_1$ and $S_2$ be two sets of disjoint $f$-fat objects in $d$-dimensions. Let $n = |S_1|+ |S_2|$. We can compute a hypercube $s$ and sets $S^{\IN}_1,S^{\IN}_2,S^{\OUT}_1,S^{\OUT}_2$ with the following properties in time $O(d \cdot n) = O(n)$:
\begin{itemize}
    \item The hypercube $s$ intersects $O(n^{1-\frac{1}{d}})$ objects of $S_1 \cup S_2$. 
    \item $S^{\IN}_1 \subseteq S_1$, $S^{\IN}_2 \subseteq S_2$, $S^{\OUT}_1 \subseteq S_1$, $S^{\OUT}_2 \subseteq S_2$
    \item All objects in $S^{\IN}_1$ and $S^{\IN}_2$ lie entirely inside $s$
    \item All objects in $S^{\OUT}_1$ and $S^{\OUT}_2$ lie entirely outside $s$
    \item $|S^{\IN}_1| \leq \frac{4^d-1}{4^d} |S_1|$
    \item $|S^{\OUT}_1| \leq \frac{4^d-1}{4^d} |S_1|$
\end{itemize}
\end{restatable}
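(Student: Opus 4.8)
The plan is to adapt the Smith--Wormald grid-shifting argument to two families of fat objects, tracking only the first family for the balance condition. First I would split the objects by size relative to a global grid. Fix a grid of cells of side length $\ell$; the key is to choose $\ell$ (or rather a small random/averaged shift and a side length chosen as a power depending on $n$) so that, after rounding each object up to its minimal enclosing cube, objects fall into $O(\log)$-many size classes and the separator we build is a box aligned to a coarse grid. Concretely, I would pick the \emph{size} parameter of the separating hypercube $s$ to be roughly $n^{1/d}$ times the smallest relevant cell, so that $s$ is a hypercube whose boundary is a union of $O(n^{(d-1)/d})$ unit cells, and then argue that only $O(n^{1-1/d})$ disjoint fat objects of each family can be stabbed by $\partial s$: small objects (smaller than a cell) crossing $\partial s$ must lie in one of the $O(n^{(d-1)/d})$ boundary cells, and by $f$-fatness each cell meets only $O(1)$ disjoint objects of size $\ge$ that cell, so a charging/packing argument over the $O(\log n)$ size classes, combined with the fact that there are only $n$ objects total, yields the $O(n^{1-1/d})$ bound. (One has to be a little careful that the number of size classes is bounded; using that we only need the bound up to the point where objects are larger than $s$ itself, which can simply be put entirely in $S^{\OUT}$, handles the tail.)

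Next I would establish the balance condition for $S_1$ only, via an averaging/sliding argument on the position of $s$. Think of sliding a hypercube of the chosen fixed size along a space-filling sequence of grid positions so that it sweeps from ``containing almost none of $S_1$'' to ``containing almost all of $S_1$''; as the center moves by one cell, the count of $S_1$-objects strictly inside changes by at most the number of $S_1$-objects stabbed at that step, which we can force to be small. More cleanly, I would use the standard trick: consider the $d$ one-dimensional ``coordinate projections'' and recurse — but since we only need $S_1$-balance, it suffices to find a single axis-aligned slab (hence a coordinate hyperplane-pair bounding a cube) that puts at most $\frac{4^d-1}{4^d}|S_1|$ of $S_1$ strictly on each side. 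The cleanest route: take the smallest enclosing cube $Q$ of $S_1^{\IN}$-candidates; by a volume/pigeonhole argument subdivide $Q$ dyadically until a sub-cube $s$ contains between $\frac{1}{4^d}|S_1|$ and $\frac{4^d-1}{4^d}|S_1|$ of $S_1$ — such a cube exists because passing from a cube to its $4^d$ equal sub-cubes (two dyadic refinements) cannot drop the count from above $\frac{4^d-1}{4^d}|S_1|$ to below $\frac{1}{4^d}|S_1|$ in one refinement step unless some sub-cube already lands in the target window (a discrete intermediate-value argument). Then set $S_1^{\IN}$, $S_1^{\OUT}$ to be the $S_1$-objects strictly inside, strictly outside $s$, and $S_2^{\IN}$, $S_2^{\OUT}$ likewise; the ``on'' objects of both families are exactly those stabbed by $\partial s$, bounded above by the first part.

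For the running time, I would observe that each step — bucketing by size, choosing the shift, performing the dyadic search guided by object counts, and classifying each object as inside/outside/on with respect to the finally chosen $s$ — touches each object $O(1)$ times with $O(d)$ work per object (comparing $d$ coordinates), giving $O(d\cdot n)=O(n)$ total; no sorting is needed if we use the grid/bucketing instead of comparison-based selection, though even an $O(n\log n)$ intermediate can be avoided by a linear-time median-of-cells computation.

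The main obstacle I expect is making the separator-stabbing bound $O(n^{1-1/d})$ hold \emph{simultaneously} for both families with a single cube $s$ whose size is \emph{fixed in advance} (so that the balance argument has a cube of the right scale to slide), rather than choosing the scale adaptively to the object sizes as in the original single-family theorem; reconciling ``$s$ must be at the right scale to be $S_1$-balanced'' with ``$s$ must be at a scale where its boundary stabs few objects of \emph{both} families'' is the delicate point. I would resolve it by noting that fatness lets us replace each object by its enclosing cube losing only $O(1)$ factors, and that a single intermediate scale $\Theta(n^{1/d})\cdot(\text{grid unit})$ works: at that scale $\partial s$ is a union of $O(n^{(d-1)/d})$ grid cells, each stabbing $O(f)$ disjoint objects of each family among those of size $\ge$ the cell, while objects smaller than a cell that cross $\partial s$ are confined to those $O(n^{(d-1)/d})$ cells and there are at most $n$ of them in total so a convexity/worst-case-packing bound over size classes still gives $O(n^{1-1/d})$; the remaining objects (size larger than $s$) number $O(f)=O(1)$ and are dumped into $S^{\OUT}$, which does not disturb the balance since $|S^{\OUT}_1|\le |S_1|$ trivially and the lower side still has $\ge \frac{1}{4^d}|S_1|$ by the dyadic search. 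I would also double-check the edge case where $|S_1|$ is tiny, in which case $s$ can be taken to contain all of $S_1^{\IN}=S_1$ vacuously or the trivial cube works and both inequalities hold since $\frac{4^d-1}{4^d}|S_1|\ge 0$.
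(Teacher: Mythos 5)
There are two genuine gaps, both at the heart of the lemma. First, your bound on the \emph{small} objects stabbed by $\partial s$ does not hold for a single fixed cube: confining them to the $O(n^{(d-1)/d})$ boundary cells gives no bound at all, since a single cell can contain arbitrarily many pairwise-disjoint objects smaller than the cell (fatness only limits disjoint objects of size at least the cell, and ``only $n$ objects total'' plus $O(\log n)$ size classes yields the trivial bound $n$). The missing idea is an averaging/pigeonhole over many parallel candidate boundaries: the paper takes $\Theta(n^{1/d})$ concentric candidate hypercubes inside an annulus $\square(c,r)\subseteq\cdot\subseteq\square(c,2r)$, observes that each small object (size below $r/\sqrt[d]{n}$) can cross at most one candidate, and hence some candidate crosses at most $2n^{1-1/d}$ small objects, while the fatness-plus-covering argument bounds the big objects crossing \emph{every} candidate by $O(f\,d\,4^d n^{1-1/d})$. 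Without choosing among $\Theta(n^{1/d})$ positions, the $O(n^{1-1/d})$ stabbing bound simply is not true; your ``random/averaged shift'' gestures at this but the argument you actually give (packing over size classes) would fail.

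Second, your balance argument is flawed, and it is not decoupled from the stabbing bound. The ``discrete intermediate-value'' claim fails: a cube whose $S_1$-count exceeds $\frac{4^d-1}{4^d}|S_1|$ can have all $4^d$ subcubes with count below $\frac{1}{4^d}|S_1|$ (already in $d=1$: parent count $0.8|S_1|$, four children of count $0.2|S_1|$ each), so the dyadic descent can skip your target window entirely; subdivision also cuts objects, so counts do not even partition cleanly. Moreover, even if a balanced cube were found this way, you would then need to move or rescale its boundary to get the small-object bound of the previous paragraph, and nothing in your construction guarantees balance is preserved under that perturbation. The paper resolves both issues simultaneously: it selects, via repeated median/quartile order statistics of representative points (taking the \emph{smaller} quartile interval in each dimension), a hypercube $s_1$ containing at least $|S_1|/4^d$ representatives whose concentric double $s_2$ excludes at least $|S_1|/4$ representatives; consequently \emph{every} candidate shell in the annulus between $s_1$ and $s_2$ satisfies $|S^{\IN}_1|,|S^{\OUT}_1|\le\frac{4^d-1}{4^d}|S_1|$ (using connectedness), leaving the full freedom of $\Theta(n^{1/d})$ shells for the pigeonhole on small objects. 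Linear time then comes from linear-time selection and from identifying, for each small object, the unique candidate shell it could cross by rounding an $L_1$ distance. Your runtime sketch is plausible, but without the two ingredients above the construction itself does not go through.
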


Given this separator lemma, we can construct a MIX algorithm with running time $\gamma(n) = O(\log n)$. The main novelty of lemma~\ref{l:fatmix} is that we achieve the separation in linear time; previous works had separator lemmas running in ``polynomial time''. It turns out that this fast running time is the key towards achieving a $O(\log n)$-time MIX algorithm; this will be clear after analyzing our MIX algorithm (see the discussion at the very end of Section~\ref{sec:mix-alg}).

Formally, our main result regarding MIX algorithm for fact objects is the following.

\begin{restatable}{lemma}{mixresult}
\label{lem:mix_result}
Fat objects in constant dimension $d$ have a MIX algorithm with running time $\gamma(n)=O(\log n)$: Given independent sets of fat objects $S_1$ and $S_2$, there is a MIX from $S_1$ to $S_2$ whose size is always at least $\min(|S_1|,|S_2|)-\Gamma(|S_1|+|S_2|)$, with $\Gamma(n)=O( n^{1-\frac{1}{d}}\log n) = o(n)$. The total running time of initializing the algorithm with $S_1$ and $S_2$ and performing all steps of MIX is $O((|S_1|+|S_2|) \cdot \log(|S_1|+|S_2|))$.
\end{restatable}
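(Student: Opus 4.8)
The plan is to use Lemma~\ref{l:fatmix} recursively to build a balanced decomposition tree for the two independent sets, and then to show that one can transform $S_1$ into $S_2$ by processing this tree in a way that keeps the running ``active'' set both independent and large at every step. First I would set up the recursion: apply Lemma~\ref{l:fatmix} to $(S_1,S_2)$ to obtain a hypercube $s$ splitting each $S_j$ into an ``inside'' part $S_j^{\IN}$, an ``outside'' part $S_j^{\OUT}$, and an ``on'' part $S_j^{\ON}$ (the objects of $S_j$ meeting the boundary of $s$), where $|S_1^{\ON}|+|S_2^{\ON}| = O(n^{1-1/d})$ and the inside/outside parts of $S_1$ are each at most a $\tfrac{4^d-1}{4^d}$ fraction of $|S_1|$. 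Recurse on $(S_1^{\IN},S_2^{\IN})$ and on $(S_1^{\OUT},S_2^{\OUT})$. Crucially, the ``inside'' and ``outside'' sides are geometrically separated: no object of the inside recursion can intersect any object of the outside recursion, so independence can be maintained side-by-side. The recursion bottoms out when $|S_1|$ (the set driving the balance) drops to $O(1)$; since only $S_1$ shrinks geometrically, the depth is $O(\log|S_1|) = O(\log n)$, and the total work across a level is $O(n)$ by the linear-time guarantee of Lemma~\ref{l:fatmix}, for a total of $O(n\log n)$ — this is exactly where the linear-time separator is essential.

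Next I would describe the MIX order. At the root, the set we output will always be of the form ``(current active set for the inside subproblem) $\cup$ (current active set for the outside subproblem)'', ignoring all ``on'' objects entirely (they are simply never included — this is the source of the $\Gamma(n)$ loss). So the MIX from $S_1$ to $S_2$ is: first run the inside MIX from $S_1^{\IN}\setminus(\text{stuff on the boundary})$ to $S_2^{\IN}$ while holding the outside part fixed at $S_1^{\OUT}$, then run the outside MIX from $S_1^{\OUT}$ to $S_2^{\OUT}$ while holding the inside part fixed at $S_2^{\IN}$. Wait — one has to be careful that at the very start the active set is literally $S_1$, which includes the boundary objects $S_1^{\ON}$; so the true first phase is to delete the $O(n^{1-1/d})$ objects of $S_1^{\ON}$ one at a time (each deletion only shrinks the set, so independence is preserved), and symmetrically the true last phase is to insert the objects of $S_2^{\ON}$ one at a time at the very end. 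Recursing this structure, the full MIX processes the decomposition tree in a fixed order, and at each atomic step one element is added or removed, so the consecutive-difference-of-one condition of the MIX definition holds.

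The size guarantee is the crux of the correctness argument, and I expect it to be the main obstacle to state cleanly. I would argue by induction on the recursion that at every intermediate step the active set has size at least $\min(|S_1|,|S_2|) - c\cdot\Gamma(n)$ for an appropriate $\Gamma(n) = O(n^{1-1/d}\log n)$. The point is that at any moment the active set equals (an intermediate inside set) $\cup$ (an intermediate outside set), where the two pieces are on geometrically separated subproblems with parameters $n_{\IN}$ and $n_{\OUT}$, $n_{\IN}+n_{\OUT} \le n$; by induction each piece has size at least $\min$ of its two endpoints minus $c\,\Gamma(n_{\IN})$ (resp.\ $c\,\Gamma(n_{\OUT})$). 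Summing, and using that $\min(a_1,a_2)+\min(b_1,b_2) \ge \min(a_1+b_1,\,a_2+b_2)$ together with $S_j = S_j^{\IN}\cup S_j^{\OUT}\cup S_j^{\ON}$, the loss from dropping the $\ON$ objects is $O(n^{1-1/d})$ at this level plus the inductive losses below; since the recursion depth is $O(\log n)$ and the per-level $\ON$ totals telescope into a geometric-type sum dominated by the top level (the set sizes controlling the split decrease geometrically), the total loss is $O(n^{1-1/d}\log n)$. I would also separately check the endpoint conditions $\MIX(\cdot,\cdot,0)=S_1$ and $\MIX(\cdot,\cdot,|S_1|+|S_2|)=S_2$ (these follow from the phase structure described above, since deleting $S_1^{\ON}$ then running the two recursive MIXes then inserting $S_2^{\ON}$ accounts for exactly $|S_1^{\IN}|+|S_1^{\ON}|+|S_2^{\IN}|+|S_1^{\OUT}|+|S_2^{\OUT}|+|S_2^{\ON}| = |S_1|+|S_2|$ atomic steps) and that the algorithmic \textsc{Advance} operation supports this order within the stated $O(n\log n)$ total time by precomputing the decomposition tree. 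The delicate bookkeeping is making the ``$\min$ of sums'' inequality and the telescoping of the $\ON$ terms precise given that only $S_1$'s sizes are guaranteed to shrink geometrically while $S_2$'s sizes are not controlled — but since the depth bound $O(\log n)$ comes from $S_1$ alone and the $\ON$ bounds at each node are in terms of that node's total $n$ which is at most the parent's, the geometric decay of $|S_1|$ at each node suffices to bound the sum of all $\ON$ contributions by $O(n^{1-1/d}\log n)$.
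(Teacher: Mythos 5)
Your decomposition is the paper's: compute the separator of Lemma~\ref{l:fatmix}, delete the objects of $S_1$ meeting it, recursively mix the two sides, insert the objects of $S_2$ meeting it, with an $O(n^{1-1/d})$ loss per level and $O(\log|S_1|)$ levels, and the $O(n\log n)$ time analysis via the linear-time separator is also as in the paper. The genuine gap is in the size bound: you fix the recursion order (always inside first, then outside), and the inequality you invoke to combine the two pieces, $\min(a_1,a_2)+\min(b_1,b_2)\ge\min(a_1+b_1,a_2+b_2)$, is false — it holds with the inequality reversed (take $a_1=b_2=0$, $a_2=b_1=10$). This is not a cosmetic slip, because Lemma~\ref{l:fatmix} balances the separator only with respect to $S_1$; the split of $S_2$ can be arbitrarily lopsided. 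Concretely, suppose $|S_1|=|S_2|=m$, essentially all of $S_1$ lies inside $s$ (the lemma allows $|S_1^{\IN}|\approx\frac{4^d-1}{4^d}m$), and all of $S_2$ lies outside, so $S_2^{\IN}=\emptyset$. Your first phase then mixes $S_1^{\IN}$ down to the empty set while the outside is frozen at $S_1^{\OUT}$, and at that moment the active set has size about $m/4^d$, a loss of $\Theta(n)$ rather than $o(n)$; so with a fixed order the claimed $\Gamma(n)=O(n^{1-1/d}\log n)$ is not merely unproved but false.

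The paper repairs exactly this point by choosing adaptively which side to mix first: label the sides $a,b$ so that $|S_1^a|+|S_2^b|\le|S_1^b|+|S_2^a|$ and mix side $a$ first. Then the intermediate configuration $S_2^a\cup S_1^b$ satisfies $|S_2^a|+|S_1^b|=\max\bigl(|S_1^b|+|S_2^a|,\,|S_1^a|+|S_2^b|\bigr)\ge\min\bigl(|S_1^a|+|S_1^b|,\,|S_2^a|+|S_2^b|\bigr)$, which is the paper's inequality~\eqref{eq:whichdir} (an instance of $\min(w+x,y+z)\le\max(w+z,x+y)$), and the same inequality is what makes both cases of the inductive size bound (the paper's Lemma~\ref{l:mix_size}) go through: during the first recursive mix the frozen part $|S_1^b|$ compensates, and during the second the already-finished part $|S_2^a|$ does. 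Everything else in your proposal stands; adding this one adaptive choice (and dropping the false min-of-sums step in favor of~\eqref{eq:whichdir}) completes the argument.
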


The proof is the topic of Section~\ref{sec:mix-alg}.

\subsection{Proof of separator lemma}\label{sec:sep}
In this subsection we prove lemma~\ref{l:fatmix}.

We start with a technical lemma.
Let $\square(c,r)$ be a hypercube centered at $c$ and with side length (size) $r$.

\begin{lemma} \label{l:getroot}
Let $S_1$ and $S_2$ be two sets of disjoint $f$-fat objects in $d$-dimensions. Let $n = |S_1|+ |S_2|$. Given two concentric hypercubes $s_1=\square(c,r)$ and $s_2=\square(c,2r)$, there is a hypercube $s_3=\square(c,r'), r\leq r' \leq 2r$ that intersects at most $(f \cdot d \cdot 4^{d}+2) \cdot n^{1-\frac{1}{d}}$ objects of $S_1 
\cup S_2$. Furthermore, $s_3$ can be computed in time $O(d\cdot n)$.
\end{lemma}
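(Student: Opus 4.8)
The goal is a "rounding" argument: among all the nested hypercubes $\square(c,r')$ with $r \le r' \le 2r$, we want one whose boundary (more precisely, whose "on" set) crosses few objects of $S_1 \cup S_2$, and we want to find it in linear time. The natural device is an averaging / pigeonhole argument over a discrete family of candidate radii. I would fix a parameter $m = \lceil n^{1/d} \rceil$ and consider the $m$ concentric "shells" obtained by subdividing the annulus between $s_1$ and $s_2$ into $m$ equally spaced candidate hypercubes $\square(c, r_j)$, $r_j = r(1 + j/m)$ for $j = 0,\dots,m$. An object of $S_1 \cup S_2$ is then charged to a candidate radius $r_j$ if the hypercube $\square(c,r_j)$ intersects it (i.e. the object is in the "on" set of that cube); the point is to bound the total charge $\sum_j |\{\text{objects hit by } \square(c,r_j)\}|$ and then pick the best $j$.

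\textbf{Bounding the total charge.} Split the objects into "large" and "small" relative to the shell width $w = r/m$. Each \emph{small} object (size $< w$) can be hit by at most a constant number of consecutive candidate cubes — geometrically, a thin shell of width comparable to $w$ meets it, and since the shells are spaced $w$ apart, only $O(1)$ of them can touch it; summing over all $\le n$ small objects gives $O(n)$ total charge, hence an average of $O(n/m) = O(n^{1-1/d})$ over the $m$ choices. For the \emph{large} objects (size $\ge w$), I invoke fatness: Definition~\ref{defn:fat} (via the remark right after it) says any box of size $w$ meets only $f$ disjoint objects of size $\ge w$ from a single disjoint family; but each candidate cube is a union of $O((r/w)^{d-1}) = O(m^{d-1})$ boxes of size $w$ covering its boundary, so each candidate cube meets $O(f \cdot m^{d-1})$ large objects of $S_1$ and likewise of $S_2$. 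Summed over $m$ candidates this is $O(f \, m^d) = O(f \, n)$, again averaging to $O(f\, n^{1-1/d})$ per candidate. Adding the two contributions and tracking the constant $f \cdot d \cdot 4^d + 2$ carefully (the $4^d$ coming from a generous count of size-$w$ boxes needed to cover the boundary of a size-$\le 2r$ cube, the $d$ from the $d$ facets) yields a candidate $s_3 = \square(c,r')$ hit by at most $(f \cdot d \cdot 4^d + 2)\, n^{1-1/d}$ objects.

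\textbf{Linear-time computation.} To make this constructive in $O(d\cdot n)$ time: for each object of $S_1 \cup S_2$, in $O(d)$ time compute (from its center and size, or its coordinates) the range of indices $j$ for which $\square(c,r_j)$ crosses it — this is an interval of consecutive $j$'s, whose endpoints are simple functions of the object's distance from $c$ and its extent. Increment a difference array over $\{0,\dots,m\}$ at the two endpoints of this interval. After processing all $n$ objects in $O(dn)$ total time, a single prefix-sum pass over the length-$(m+1) = O(n)$ array gives, for each $j$, the exact number of objects crossed by $\square(c,r_j)$; return the minimizer. The averaging bound above guarantees the minimum is at most $(f\cdot d\cdot 4^d + 2)\,n^{1-1/d}$.

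\textbf{Main obstacle.} The routine part is the difference-array bookkeeping. The delicate step is the geometric counting for large objects: one must be careful that "an object of size $\ge w$ crossing a given candidate cube" really is controlled by the fatness constant applied to $O(m^{d-1})$ size-$w$ boxes covering that cube's boundary — and that a large object crossing \emph{consecutive} candidate cubes is not overcounted in a way that breaks the $O(f\,n)$ total. I expect the cleanest framing is: a large object crossing $\square(c,r_j)$ has its boundary passing through the thin dyadic shell $\square(c,r_{j+1}) \setminus \square(c,r_j)$-region, charge it there, cover that region by $O(m^{d-1})$ size-$w$ boxes, and apply fatness box by box; then each of the $m$ shells carries $O(f\,m^{d-1})$ large-object charge and the sum telescopes to $O(f\,m^d)=O(f\,n)$. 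Verifying the constant works out to exactly $f\cdot d\cdot 4^d + 2$ (rather than merely $O(f\cdot 4^d)$) is the one place where I would grind the geometry, choosing $m = \lceil n^{1/d}\rceil$ and bounding $r/w \le m$, number of facets $= 2d$ absorbed into the $d\cdot 4^d$ factor, and the "+2" absorbing the small-object term.
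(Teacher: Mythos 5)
Your combinatorial existence argument is essentially the paper's: discretize the annulus between $s_1$ and $s_2$ into roughly $n^{1/d}$ concentric candidate hypercubes, classify objects as small or big relative to the candidate spacing, pigeonhole the small objects over the candidates, and bound the big objects \emph{uniformly for every candidate} by covering its $2d$ facets (each of size at most $2r$) with spacing-size boxes and applying $f$-fatness to each box and to each of the two families $S_1,S_2$; this is exactly where the $f\cdot d\cdot 4^{d}$ term and the additive $2$ come from, and your deferred constant-grinding (choosing the spacing so that a small object meets $O(1)$, ideally one, candidate) is the same bookkeeping the paper does with its $\tfrac12 n^{1/d}$ candidates spaced $r/n^{1/d}$ apart.

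The genuine soft spot is the algorithmic step. You propose to compute, in $O(d)$ time per object, the \emph{exact} interval of candidate indices $j$ such that $\square(c,r_j)$ crosses the object, and then take the exact minimizer via a difference array. For arbitrary $f$-fat objects the only constant-size data you can assume is (say) the minimal enclosing hypercube; its center and size do not determine which candidates the object itself crosses, and an exact object-versus-cube intersection test is not an $O(d)$ primitive in this generality. The fix is to over-approximate from the enclosing hypercube, but then you cannot simply minimize one aggregated over-count: a big object's enclosing cube may ``possibly cross'' nearly all candidates, so the averaged over-count is no longer $O(n^{1-1/d})$. This is precisely why the paper separates the two roles: big objects are handled purely combinatorially (the facet-covering bound holds for \emph{every} candidate, so no big object ever needs to be located), and only small objects are located algorithmically, each assigned in $O(d)$ time---by rounding the distance of its enclosing-hypercube center to $c$ to the candidate spacing---to the unique candidate it could intersect; one then picks the candidate minimizing this small-object over-count. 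With that modification (and the spacing tuned so each small object maps to one candidate, recovering the $+2$), your argument becomes the paper's proof, including the $O(d\cdot n)$ running time.
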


\begin{proof}
Fix for each fat object a minimum enclosing hypercube, arbitrarily chosen if it is not unique. For simplicity of presentation we assume general position: all coordinates and sizes of enclosing hypercubes are unique and no enclosing hypercube has size exactly $\frac{r}{\sqrt[d]{n}}$; these assumptions can be removed via standard methods. We classify each object in of $S_1$ and $S_2$ as \emph{big} or \emph{small} based on whether its size is larger or smaller than $\frac{r}{\sqrt[d]{n}}$. 

The proof is divided into two steps. First, the combinatorial part, the existence of hypercube $s_3$ and second, the algorithmic part: finding $s_3$ in linear time.

\smallskip
\paragraph{Combinatorial part: existence of $s_3$.} Consider the $\frac{1}{2}\sqrt[d]{n}$ hypercubes  $s_i=s(c,r+\frac{2ir}{\sqrt[d]{n}})$ for $i \in [1..\frac{1}{2} \sqrt[d]{n}]$. All of these hypercubes are on or in the hyperannulus defined by $s_1$ and $s_2$. We call them \emph{candidate hypercubes}. We will show that at least one of these hypercubes meets the requirements to be the separator $s_3$ of the lemma. To this end, we treat small and big objects separately. 

\textit{Small objects:} Each of the candidate hypercubes is at distance $\frac{r}{\sqrt[d]{n}}$ from each other, and thus each small object in $S_1 \cup S_2$ can only intersect at most one of the $\frac{1}{2}\sqrt[d]{n}$ candidate hypercubes. 
Thus there will be at least one candidate hypercube which intersects at most $\frac{n}{\frac{1}{2}\sqrt[d]{n}}=2 n^{1-\frac{1}{d}}$ small objects from $S_1 \cup S_2$.

\textit{Big objects:}  Now we bound how many big objects from $S_1$, and by symmetry $S_2$, can intersect \textit{any} candidate hypercube $s_i$. To do that, we upper bound the number of disjoint hypercubes of size $r/\sqrt[d]{n}$ that may intersect a candidate hypercube; call this bound $\alpha$. Since each hypercube of size $r/\sqrt[d]{n}$ intersects at most $f$ big objects of $S_1$ (due to Definition~\ref{defn:fat}), we get that at most $f \cdot \alpha$ big objects of $S_1$ might intersect a candidate hypercube. By symmetry, the same bound holds for $S_2$, so overall, $ f \cdot 2 \alpha $ big objects of $S_1 \cup S_2$ may intersect a candidate hypercube. 

We now compute the upper bound $\alpha$. Each candidate hypercube $s_i$ is comprised of $2d$ faces, each of which is a $(d-1)$-dimensional hypercube of size at most $2r$. 
 Covering a size $x$ hypercube in $d$ dimensions with disjoint $\geq d$ dimensional hypercubes of size $y$ can be done with $\left \lceil \frac{x}{y} \right\rceil^d\leq \left( \frac{2x}{y} \right)^d$ size $y$ hypercubes. Since $x \leq 2r$, and using $y = r/\sqrt[d]{n}$, we get that each  $(d-1)$-dimensional face of any $s_i$ can be covered with at most $ \left( \frac{4r}{r/\sqrt[d]{n}}  \right)^{d-1}=4^{d-1}n^{1-\frac{1}{d}}$ hypercubes of size $y$. Since there are $2d$ such facets in each $s_i$, we get a total of at most $ \alpha = 2d \cdot 4^{d-1}n^{1-\frac{1}{d}}$ hypercubes of size $r/\sqrt[d]{n}$ to cover any $s_i$. 

By the discussion above, we get that overall $f \cdot 2 \alpha  =  f \cdot d \cdot 4^dn^{1-\frac{1}{d}}$ big objects in $S_1 \cup S_2$ intersect any $s_i$.
 
\textit{Putting everything together:} 
Combining the discussion on small and big objects, we get that there exists a candidate hypercube that intersects at most $(2 + f \cdot d \cdot 4^d) \cdot n^{1-\frac{1}{d}}$ objects from $S_1 \cup S_2$. 

\smallskip
\paragraph{Algorithmic part: Finding $s_3$ in time $O(d\cdot n)$.} What remains is the algorithmic claim; we know there is a candidate hypercube that has less than $2n^{1-\frac{1}{d}}$ small objects that intersect it, we need only to identify it.
To do so, we compute for each small object the $L_1$ (Manhattan) distance $\dist$ from the center of its enclosing hypercube to $c$. 
Since the points on the candidate hypercubes have $L_1$ distance from $c$ that are multiples of $\frac{2r}{\sqrt[d]{n}}$, rounding $\dist$ to the nearest multiple of $\frac{2r}{\sqrt[d]{n}}$ will identify the only candidate hypercube that it \emph{could} intersect. Thus, for each small object we identify the unique candidate hypercube that it could intersect. Then we choose the candidate hypercube with the fewest such possible intersections; we know from the combinatorial part of the proof that the chosen hypercube will have at most $2n^{1-\frac{1}{d}}$ possible intersections with small objects.

Computing the $L_1$ distances takes linear time, $O(d \cdot n)$, as the $n$ centers are each $d$-dimensional. Finding the hypercube with fewer (possible) intersections takes time $O(\sqrt[d]{n}) = o(n)$. Therefore the overall running time is $O(d \cdot n)$
\end{proof}

We are now ready to prove lemma~\ref{l:fatmix}, which we first restate here.

\lfatmix*

\begin{proof}
For each object in $S_1 \cup S_2$, designate a \emph{representative point}, which is simply an arbitrary point inside the object.
Call the first three dimensions $x,y,z$.
Compute the $|S_1|/4,|S_1|/2,3|S_1|/4$th order statistics of the $x$-coordinates of the representative points of the objects in $|S_1|$, call them $x_1,x_2,x_3$.
Let $[x_a,x_b]$ be the smaller (in size) of $[x_1,x_2]$ and $[x_2,x_3]$. Then looking at the $|S_1|/4$ representative points of the objects in $|S_1|$ with $x\in [x_a,x_b]$, compute the $|S_1|/16,|S_1|/8,3|S_1|/16$ order statistics of the $y$-coordinates, call them $y_1,y_2,y_3$.
Let $[y_a,y_b]$ be the smaller (in size) of $[y_1,y_2]$ and $[y_2,y_3]$.
Continue this process with the $z$ and other dimensions, if needed.
Let $s_1=\square(c,r)$ be a minimal hypercube containing the hyperrectangle $[x_a,x_b]\times [y_a,y_b]\cdots$, and let $s_2=\square(c,2r)$ be the concentric square to $s_1$ with double the size. Then at least $|S_1|/4^d$ representative points are in $s_1$ and $|S_1|/4$ representative points are out of $s_2$ (namely those with $x>x_3$ if $[x_\ell,x_r]=[x_1,x_2]$ and those with $x<x_1$ if $[x_\ell,x_r]=[x_3,x_4]$ ). 

We then apply lemma~\ref{l:getroot} which returns a square $s$ in the annulus of $s_1$ and $s_2$ that intersects at most 
$(f \cdot d \cdot 4^{d}+2)n^{1-\frac{1}{d}} = O(n^{1-\frac{1}{d}})$
objects of $S_1 \cup S_2$. 
Given $s$, $S^{\IN}_1,S^{\IN}_2,S^{\OUT}_1,S^{\OUT}_2$ can be computed.
Thus, the at least $\frac{|S_1|}{4^d}$ objects with centers inside $s$ will not be in $S^{\OUT}_1$, and the at least $\frac{|S_1|}{4^d}$ objects with centers outside $s$ will not be in $S^{\IN}_1$ (This is the point where we need the objects to be connected). This guarantees that each of $S^{\IN}_1,S^{\OUT}_1$ had at most a fraction $\frac{4^d-1}{4^d}$ of the objects of $S_1$.

Lemma~\ref{l:getroot} runs in $O(nd)$ time, as does the additional work described here: computing the $d$ order statistics of a geometrically shrinking set of points \cite{DBLP:journals/jcss/BlumFPRT73}, and classifying each object based on its relationship to $s$.
\end{proof}
\end{onlymain}

\begin{onlymain}
\subsection{MIX algorithm and analysis}
\label{sec:mix-alg}
In this section, we prove lemma~\ref{lem:mix_result}, which we first restate here. 

\mixresult*

\begin{proof}
We compute the separator $s$ from Lemma~\ref{l:fatmix}. Let $S_1^{\IN}$, $S_1^{\OUT}$, $S_1^s$
    $S_2^{\IN}$, $S_2^{\OUT}$, $S_2^s$, denote partition of $S_1$ and $S_2$ into parts that are completely inside the separator, completely outside, and those that intersect the separator, respectively.

\paragraph{Main Idea.}
The main idea is the following: First, we will remove all objects of $S_1^s$. Then the remaining objects of $S_1$ would be either completely inside $s$ or completely outside $S$. We will use recursively the MIX function in both sides to switch from $S_1^{\IN}$ to $S_2^{\IN}$ and from $S_1^{\OUT}$ to $S_2^{\OUT}$ respectively. At the end we will add $S_2^s$. 

\paragraph{Applying the recursion carefully.} Recall that our goal is twofold: First, minimize the running time of MIX, second, make sure that at each step the size of the current set is at large as possible. Formally, minimize both $\gamma(n)$ and $\Gamma(n)$, which should be definitely sublinear, and as small as possible. Note that towards the second goal, i.e., keeping the set size as high as possible at all times, the sets used to apply the first recursive call of MIX matter: should we start from switching from  $S_1^{\IN}$ to $S_2^{\IN}$ or from $S_1^{\OUT}$ to $S_2^{\OUT}$? We want to make the choice that leads to the largest independent set at the intermediate step when only one of the two recursive calls has been applied. 

We denote $a$ and $b$ the sides of separator ($\IN$ and $\OUT$) so that $|S_1^a| +|S_2^b|  \leq |S_1^b| +|S_2^a| $ holds. 
As $\min(w+x,y+z)\leq \max(w+z,x+y)$:
\begin{equation}
    |S_1^b| + |S_2^a| 
    = \max(    |S_1^b| + |S_2^a| ,|S_1^a| +|S_2^b|)
    \geq \min(|S_1^a|+|S_1^b|, |S_2^a|+|S_2^b| ). 
\label{eq:whichdir}
\end{equation}
This way, at the intermediate step when we have mixed only one side, the set size will not be smaller than the beginning or the end of the mix operation.

\paragraph{Formal Description.} The MIX function then proceeds as follows:
\begin{enumerate}[nosep]
    \item \label{step:1} Start with $S_1$.
    \item \label{step:2} Remove the elements of $S_1^s$, one at a time, to give $S_1^a \cup S_1^b$.
    \item \label{step:3} Recursively MIX $S_1^a$ to $S_2^a$. 
    \item \label{step:4} Now we have $S_2^a \cup S_1^b$.
    \item \label{step:5} Recursively MIX $S_1^b$ to $S_2^b$. At the end of this process we will have $S_2^b \cup S_2^b$.
    \item \label{step:6} Add the elements of $S_2^s$, one at a time.
    \item \label{step:7}We finish with $S_2^a \cup S_2^b \cup S_2^s=S_2$.
\end{enumerate}
The base case is when one of the two sets is empty, and the MIX proceeds in the obvious way by deleting all elements of $S_1$ if $S_2$ is empty, or inserting all elements of $S_2$ if $S_1$ is empty. In such a case the lemma holds trivially as $\min(|S_1|,|S_2|)$ is zero.

We need to argue that at all times this process generates a set that is an independent set of the claimed size. 

\paragraph{Always an independent set.} In steps \ref{step:1}-\ref{step:2} we always have a subset of $S_1$, which is an independent set, the same holds in steps \ref{step:6}-\ref{step:7} with respect to $S_2$.
In step~\ref{step:4},  $S_2^a \cup S_1^b$ is independent as each of the sets are independent and all of the objects on each set are entirely on opposite sides of the separating rectangle $s$. Steps~\ref{step:3} and \ref{step:5} hold by induction, and that the part we are recursively MIXing and the part that is unchanged are entirely on opposite sides of the separating rectangle.

\paragraph{Size bound.}
Let $\MIX_{\min}(S_1,S_2,n)$ be the smallest size of the independent set during the running of the MIX function from $S_1$ to $S_2$, and where $n$ is an upper bound on $|S_1|+|S_2|$. Then we can directly express $\MIX_{\min}$ as a recurrence, taking the minimum of each step of the algorithm:
\begin{align*} \label{eq:mixrec}
    \MIX_{\min}(S_1,S_2,n)=\min(
    \begin{split}
    |S_1|,
    |S_1^a|+|S_1^b|,
    \MIX_{\min}(|S_1^a|,|S_2^a|,n)+|S_1^b|,
    |S_2^a| +|S_1^b| ,\\
    |S_2^a|+\MIX_{\min}(|S_1^b|,|S_2^b|,n),
    |S_2^a|+|S_2^b|,
    |S_2|)
    \end{split}
\end{align*}
We will prove that
\begin{equation}
    \label{eq:mix_size}
    \MIX(S_1,S_2,n)\geq \min(|S_1|,|S_2|)-(\log_{\frac{4^d}{4^d-1}} |S_1|) \cdot  n^{1-\frac{1}{d}}
\end{equation}
which implies the claim of this lemma. The details of this inductive proof are given as Lemma~\ref{l:mix_size} in Appendix~\ref{ax:mix}. 
It follows exactly the intuition that (i) the only loss in the MIX function is the separators at each level of the recursion, (ii) there are logarithmic levels of the recursion and (iii) the size of the separators in each level are $O(n^{1 - 1/d})$. Some care must be taken, since the separators are only balanced for one of the two sets.

\end{onlymain}

\begin{onlyapp}

\section{Missing proofs from Section~\ref{s:makemix}}\label{ax:mix}

Here we prove the size claim of lemma~\ref{lem:mix_result} 

Recall that $\MIX_{\min}(S_1,S_2,n)$ be the smallest size of the independent set during the running of the MIX function from $S_1$ to $S_2$, and where $n$ is an upper bound on $|S_1|+|S_2|$. Then we can directly express $\MIX_{\min}$ as a recurrence, taking the minimum of each step of the algorithm:
\begin{equation} \label{eq:mixrectwo}
    \MIX_{\min}(S_1,S_2,n)=\min
    \begin{cases}
    |S_1|\\
    |S_1^a|+|S_1^b|\\
    \MIX_{\min}(|S_1^a|,|S_2^a|,n)+|S_1^b|\\
    |S_2^a| +|S_1^b| \\
    |S_2^a|+\MIX_{\min}(|S_1^b|,|S_2^b|,n)\\
    |S_2^a|+|S_2^b|\\
    |S_2|
    \end{cases}
\end{equation}
We will prove that
\begin{lemma}
    \label{l:mix_size}
    $$\MIX(S_1,S_2,n)\geq \min(|S_1|,|S_2|)-(\log_{\frac{4^d}{4^d-1}} |S_1|) \cdot  n^{1-\frac{1}{d}}$$
\end{lemma}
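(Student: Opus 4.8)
The plan is to prove the bound by strong induction on $|S_1|$ (equivalently, on the depth of the recursion), using the recurrence~\eqref{eq:mixrectwo}. Write $L(m) = \log_{4^d/(4^d-1)} m$ and $\epsilon_d = 1 - \tfrac{1}{d}$, so the target lower bound is $\min(|S_1|,|S_2|) - L(|S_1|)\cdot n^{\epsilon_d}$. The base case is when one of $S_1, S_2$ is empty: then $\min(|S_1|,|S_2|)=0$ and the MIX only deletes or only inserts, so every intermediate set has size at least $0$, and the inequality holds trivially (the subtracted term is nonnegative). For the inductive step, I would bound each of the seven entries of the $\min$ in~\eqref{eq:mixrectwo} separately and check each is at least $\min(|S_1|,|S_2|) - L(|S_1|)\cdot n^{\epsilon_d}$.

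The easy entries are the first, second, sixth, and seventh. For $|S_1|$ and $|S_2|$: each is trivially $\geq \min(|S_1|,|S_2|) \geq \min(|S_1|,|S_2|) - L(|S_1|)n^{\epsilon_d}$. For $|S_1^a|+|S_1^b| = |S_1| - |S_1^s|$: by Lemma~\ref{l:fatmix} we have $|S_1^s| \le |S_1^s \cup S_2^s| = O(n^{\epsilon_d})$, in fact at most $(f\cdot d\cdot 4^d + 2)n^{\epsilon_d}$, so this is $\geq |S_1| - O(n^{\epsilon_d}) \ge \min(|S_1|,|S_2|) - L(|S_1|)n^{\epsilon_d}$ as long as the constant hidden in $L$'s coefficient dominates the separator constant — here is where one must be a little careful about constants, but since $L(|S_1|)\ge 1$ for $|S_1|\ge 4^d/(4^d-1)$ and the statement is really up to constants, I would fold the separator constant into the $n^{\epsilon_d}$ term and note $\Gamma(n) = O(n^{\epsilon_d}\log n)$ absorbs it. The sixth entry $|S_2^a|+|S_2^b| = |S_2|-|S_2^s|$ is handled symmetrically. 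The fourth entry $|S_2^a| + |S_1^b|$ is the one flagged as needing care: by the choice of labels $a,b$ in~\eqref{eq:whichdir}, $|S_1^b|+|S_2^a| \ge \min(|S_1^a|+|S_1^b|,\, |S_2^a|+|S_2^b|) = \min(|S_1|-|S_1^s|,\, |S_2|-|S_2^s|) \ge \min(|S_1|,|S_2|) - O(n^{\epsilon_d})$, again absorbed.

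The substantive entries are the third, $\MIX_{\min}(|S_1^a|,|S_2^a|,n) + |S_1^b|$, and the fifth, $|S_2^a| + \MIX_{\min}(|S_1^b|,|S_2^b|,n)$, which is where the induction and the balance condition are used. For the third: by induction (note $|S_1^a| \le \tfrac{4^d-1}{4^d}|S_1|$ by Lemma~\ref{l:fatmix}, and the third argument of the recursive call is still the global bound $n$, so $n^{\epsilon_d}$ does not shrink), $\MIX_{\min}(|S_1^a|,|S_2^a|,n) \ge \min(|S_1^a|,|S_2^a|) - L(|S_1^a|)n^{\epsilon_d} \ge \min(|S_1^a|,|S_2^a|) - (L(|S_1|)-1)n^{\epsilon_d}$, using $L(|S_1^a|) \le L\!\left(\tfrac{4^d-1}{4^d}|S_1|\right) = L(|S_1|) - 1$. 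Adding $|S_1^b|$ gives $\min(|S_1^a|+|S_1^b|,\, |S_2^a|+|S_1^b|) - (L(|S_1|)-1)n^{\epsilon_d}$; the first term in that min is $|S_1|-|S_1^s| \ge \min(|S_1|,|S_2|) - O(n^{\epsilon_d})$ as before, and the second term $|S_2^a|+|S_1^b| \ge \min(|S_1|,|S_2|) - O(n^{\epsilon_d})$ is exactly the fourth-entry bound above. So the third entry is $\ge \min(|S_1|,|S_2|) - (L(|S_1|)-1)n^{\epsilon_d} - O(n^{\epsilon_d})$, and the saved additive $n^{\epsilon_d}$ from the drop in $L$ pays for the $O(n^{\epsilon_d})$ separator loss at this level. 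The fifth entry is symmetric, except now the recursion is on $S_1^b$ whose size is \emph{not} guaranteed balanced — but it is still at most $|S_1|$, so $L(|S_1^b|) \le L(|S_1|)$, and crucially here we add $|S_2^a|$ (not $|S_1^b|$), and the worst term turns out again to be $|S_2^a|+|S_1^b|$ or $|S_2|-|S_2^s|$, both already bounded; the recursion-depth bookkeeping works because step~\ref{step:5} happens after step~\ref{step:3}, so the "$a$" side is already fully converted and contributes a clean $|S_2^a|$.

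The main obstacle I anticipate is precisely the asymmetry noted in the paper: the separator is balanced only for $S_1$, so the recursion on the $b$-side (step~\ref{step:5}) does not enjoy the geometric decrease $|S_1^b|\le\tfrac{4^d-1}{4^d}|S_1|$, and one cannot naively charge its loss against a drop in $L$. The resolution is that in the fifth branch we are not recursing on the full current set — the $a$-side has already been converted to $S_2^a$ and sits frozen — so the relevant "$\min$" inside that branch is $\min(|S_2^a|+|S_1^b|, |S_2^a|+|S_2^b|)$, both of which are $\ge\min(|S_1|,|S_2|)-O(n^{\epsilon_d})$ \emph{without} needing the inductive slack, and the recursion's own $-L(|S_1^b|)n^{\epsilon_d}$ loss is bounded by $-L(|S_1|)n^{\epsilon_d}$ which is already what we are allowed. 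The only place the geometric decrease is genuinely needed is the third branch, where it is available. Once this is untangled, assembling the seven cases is routine; I would present it as: state the inductive hypothesis, dispatch the four trivial branches, then do branches three and five carefully as above.
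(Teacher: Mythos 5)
Your overall plan (strong induction over the recurrence \eqref{eq:mixrectwo}, dispatching the non-recursive branches directly, using \eqref{eq:whichdir} for the cross term $|S_2^a|+|S_1^b|$, and paying for the per-level separator loss with a decrement of the logarithm) is the paper's argument; the paper merely shortens the case analysis by observing that the five non-recursive entries dominate the two recursive ones, so only the third and fifth branches need to be bounded. However, your treatment of the fifth branch contains a genuine error. You assert that $|S_1^b|$ is ``not guaranteed balanced'' and try to get by with only $L(|S_1^b|)\le L(|S_1|)$, claiming the inner minimum terms are $\ge\min(|S_1|,|S_2|)-O(n^{1-1/d})$ ``without needing the inductive slack.'' But that is exactly where the argument fails to close: the fifth entry then only gives
\begin{equation*}
|S_2^a|+\MIX_{\min}(|S_1^b|,|S_2^b|,n)\;\ge\;\min(|S_1|,|S_2|)-O\bigl(n^{1-\frac{1}{d}}\bigr)-L(|S_1^b|)\,n^{1-\frac{1}{d}},
\end{equation*}
and with $L(|S_1^b|)\le L(|S_1|)$ the uncompensated $O(n^{1-1/d})$ term (coming from $|S_2^s|$ and the slack in \eqref{eq:whichdir}) exceeds the allowed budget $L(|S_1|)\,n^{1-1/d}$; there is nothing to charge it to, and no constant-factor weakening of the inductive hypothesis (replacing $L$ by $C\cdot L$) repairs this, since the shortfall is additive per level and the potential does not decrease in your accounting. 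So as written your induction does not establish the stated bound.

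The fix is that your premise is a misreading of the paper's caveat. Lemma~\ref{l:fatmix} bounds \emph{both} $|S_1^{\IN}|\le\frac{4^d-1}{4^d}|S_1|$ and $|S_1^{\OUT}|\le\frac{4^d-1}{4^d}|S_1|$; since $\{a,b\}=\{\IN,\OUT\}$, the balance guarantee holds for $|S_1^b|$ just as for $|S_1^a|$. The asymmetry the paper warns about is that the separator is balanced with respect to $S_1$ only, not $S_2$ (the sets $S_2^a,S_2^b$ may be wildly unbalanced relative to $|S_2|$), and this is harmless precisely because the potential $L$ is a function of $|S_1|$ alone. With $L(|S_1^b|)\le L(|S_1|)-1$ available, the fifth branch is handled exactly like the third, the saved $n^{1-1/d}$ pays for the separator loss at that level, and the induction closes — which is what the paper's Case~2 does. (A minor shared imprecision: the separator actually meets up to $(f\,d\,4^d+2)\,n^{1-1/d}$ objects, so strictly one should carry this constant or absorb it, as both you and the paper implicitly do when passing to $\Gamma(n)=O(n^{1-1/d}\log n)$.)
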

\begin{proof}

We need to prove this by induction for each of the cases of~\eqref{eq:mixrectwo}, but two cases dominate the others:
As 
\begin{align*}
|S_1|\geq |S_1^a|+|S_1^b| &\geq \MIX_{\min}(|S_1^a|,|S_2^a|,n)+|S_1^b|\\
    |S_2^a| +|S_1^b| &\geq \MIX_{\min}(|S_1^a|,|S_2^a|,n)+|S_1^b| \\
|S_2| \geq |S_2^a|+|S_2^b| &\geq |S_2^a|+\MIX_{\min}(|S_1^b|,|S_2^b|,n)\\
    |S_2^a| +|S_1^b| &\geq |S_2^a|+\MIX_{\min}(|S_1^b|,|S_2^b|,n)
\end{align*}
it is sufficient to show: \begin{align*}
   \MIX_{\min}(|S_1^a|,|S_2^a|,n)+|S_1^b|
&\geq \min(|S_1|,|S_2|)-(\log_{\frac{4^d}{4^d-1}}|S_1|) n^{1-\frac{1}{d}} & \text{and} 
\\
|S_2^a|+\MIX_{\min}(|S_1^b|,|S_2^b|,n) &
\geq \min(|S_1|,|S_2|)-(\log_{\frac{4^d}{4^d-1}}|S_1|) n^{1-\frac{1}{d}}. 
\end{align*}

\paragraph{Case 1:} Proving $\MIX_{\max}(S_1^a,S_2^a,n)+S_1^b \geq \min(|S_1|,|S_2|)-(\log_{\frac{4^d}{4^d-1}} |S_1|) n^{1-\frac{1}{d}}$
\begin{align*}
&\MIX_{\max}(S_1^a,S_2^a,n)+|S_1^b | 
\\
&\geq
\min(|S_1^a|,|S_2^a|)-(\log_{\frac{4^d}{4^d-1}} |S_1^a|) n^{1-\frac{1}{d}}+|S_1^b|
& \text{Induction}
\\ &
=\min(|S_1^a|+|S_1^b|,|S_2^a|+|S_1^b|)-(\log_{\frac{4^d}{4^d-1}} |S_1^a|) n^{1-\frac{1}{d}}
\\ 
&\geq \min(|S_1^a|+|S_1^b|,|S_2^a|+|S_2^b|)
-(\log_{\frac{4^d}{4^d-1}} S_1^a) n^{1-\frac{1}{d}}
&\text{From \eqref{eq:whichdir}}
\\
&\geq \min(|S_1|^a+|S_1^b|,|S_2^a|+|S_2^b|)
-(\log_{\frac{4^d}{4^d-1}} \frac{4^d-1}{4^d}S_1) n^{1-\frac{1}{d}}
& |S_1^a| \leq \frac{4^d-1}{4^d}S_1
\\
&= \min(|S_1^a|+|S_1^b|,|S_2^a|+|S_2^b|)
-(\log_{\frac{4^d}{4^d-1}} S_1) n^{1-\frac{1}{d}}
+n^{1-\frac{1}{d}}
\\
&= \min(|S_1^a|+|S_1^b|+n^{1-\frac{1}{d}},|S_2^a|+|S_2^b|+ n^{1-\frac{1}{d}})
-(\log_{\frac{4^d}{4^d-1}} |S_1|) n^{1-\frac{1}{d}}
\\
&\geq \min(|S_1^a|+|S_1^b|+ n^{1-\frac{1}{d}},|S_2^a|+|S_2^b|+ n^{1-\frac{1}{d}})
-(\log_{\frac{4^d}{4^d-1}} |S_1|) n^{1-\frac{1}{d}}
&
n\geq |S_1|+|S_2|
\\
&\geq  \min(|S_1^a|+|S_1^b|+|S_1^s|,|S_2^a|+|S_2^b|+|S_2^s|)
-(\log_{\frac{4^d}{4^d-1}} |S_1|) n^{1-\frac{1}{d}}
&
|S_1^s| + |S_2^s| \leq n^{1-\frac{1}{d}}
\\
& = \min(|S_1|,|S_2|)-(\log_{\frac{4^d}{4^d-1}} |S_1|)n^{1-\frac{1}{d}}
\end{align*}

\paragraph{Case 2:} Proving $|S_2^a|+\MIX_{\min}(|S_1^b|,|S_2^b|,n) 
\geq \min(|S_1|,|S_2|)-(\log_{\frac{4^d}{4^d-1}}|S_1|)n^{1-\frac{1}{d}}$

This is almost symmetric. The first three lines differ, and the fourth line is the same as in the first case.

\begin{align*}
&|S_2^a|+\MIX_{\min}(|S_1^b|,|S_2^b|,n) \\
&\geq
|S_2^a| + \min(|S_1^b|,|S_2^b|)-(\log_{\frac{4^d}{4^d-1}} |S_1^b|) n^{1-\frac{1}{d}}
& \text{Induction}
\\ &
=\min(|S_1^b|+|S_2^a|,|S_2^a|+|S_2^b|)-(\log_{\frac{4^d}{4^d-1}} |S_1^a|) n^{1-\frac{1}{d}}
\\ &
=\min(|S_1^a|+|S_1^n|,|S_2^a|+|S_2^b|)-(\log_{\frac{4^d}{4^d-1}} |S_1^a|) n^{1-\frac{1}{d}}
&\text{From \eqref{eq:whichdir}}
\end{align*}
\end{proof}
\end{onlyapp}
\begin{onlymain}

\paragraph{Running time.} The running time is given by the recursion:
$$
T(S_1,S_2)=\begin{cases}
|S_1| & \text{if }S_2=\emptyset
\\
|S_2| & \text{if }S_1=\emptyset
\\
T(S^{a}_1,S^{a}_2)+T(S^{b}_1,S^{b}_2)+ O(|S_1|+|S_2|)
&\text{otherwise}
\end{cases}
$$
where the additive term in the last case is dominated by the time needed to compute the separator (Lemma~\ref{l:fatmix}) which is $O(d \cdot (|S_1|+|S_2|)) = O(|S_1|+|S_2|)$, since $d$ is assumed to be constant.

Recall $|S^{a}_1|,|S^{a}_2|\leq \frac{4^d}{4^d-1}S_1$,  
$S^{a}_1$ and $S^{b}_1$ are disjoint subsets of $S_1$, 
and
$S^{a}_2$ and $S^{b}_2$ are disjoint subsets of $S_2$. Hence the recursion depth is logarithmic in $|S_1|$ and each item from $S_1$ and $S_2$ is passed to at most one of the recursive terms.

Thus the running time is $O((|S_1|+|S_2|)\log |S_1|)$. As the running time is defined to be
$(|S_1|+|S_2|) \cdot \gamma(|S_1|+|S_2|)$, we have $\gamma(|S_1|+|S_2|)\leq \log (|S_1|+|S_2|)$ \end{proof}

\smallskip
\paragraph{Remark.} We note the effect of the running time of the separator algorithm (Lemma~\ref{l:fatmix}) on the running time of the MIX algorithm: If the running time was $O(n^c)$ for some constant $c$, then the additive term in the recursion would have increase to $O((|S_1|+|S_2|)^c)$, leading to $\gamma(n) = n^{c-1} \cdot \log n$; such a running time would be sublinear only for $c <2$; here, by achieving $c=1$, we get the fastest possible running time which implies the logarithmic running time for MIX for fat objects. 

\end{onlymain}

\begin{onlymain}

\section{DISQS for fat objects} \label{s:makedisq}

In this section, we define DISQS for various classes of geometric objects. 

\paragraph{Warm-up: Intervals.}
First observe that for intervals, a $1$-DISQS with running time $O(\log n)$ follows from the classic greedy algorithm \cite{DBLP:journals/siamcomp/Gavril72}. 
By storing the intervals in an augmented binary search tree, one can insert and delete intervals as well as answer queries of the form ``What is the interval entirely to the right of $x$ with the leftmost right endpoint?'' As intervals are fat, this implies a $(1-\epsilon)$ approximate MIS algorithm with running time $O(\frac{1}{\epsilon}\log n)$. 
This in not new, in the past year a complicated structure via local exchanges appeared in~\cite{BCIK20} and soon after a much simpler method \cite{CMR20} using a local rebuilding was obtained; we obtain this as part of our more general scheme for fat objects.

\paragraph{Fat objects.} We now focus on developing DISQS for fat objects. This involves two ideas. First, we use a simple greedy offline algorithm that computes a constant-approximate MIS for fat objects. Then we combine this algorithm with a range searching data structure to implement the greedy choice.

\paragraph{A greedy offline approximation algorithm.}
Given a collection of fat objects, we consider the independent set obtained by sorting them by size, from the smallest to the largest, and adding them greedily to the independent set, provided they do not intersect anything added previously. We refer to this algorithm as the \emph{greedy} algorithm. It was considered in particular by Chan and Har-Peled \cite{chan2012approximation}, but was known in special cases before~\cite{EKNS00,MBHRR95}. 
From the definition of $f$-fat objects, every successive object returned by the algorithm can intersect at most $f$ disjoint objects that are larger.
Hence this simple algorithm yields a constant-factor approximation algorithm for fat objects.
\begin{lemma}
\label{lem:greedy}
For $f$-fat objects in dimension $d$, the greedy algorithm returns an $(1/f)$-approximate MIS.
\end{lemma}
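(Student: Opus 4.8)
The plan is to show that the greedy algorithm, which processes objects from smallest to largest and adds an object to the independent set $I$ whenever it does not intersect any previously added object, produces an independent set whose size is at least $\frac{1}{f}\OPT$. The independent set property is immediate from the construction, so the only content is the size bound. I would argue via a charging scheme: fix an (unknown) maximum independent set $I^\ast$ with $|I^\ast| = \OPT$, and charge each object of $I^\ast$ to an object of $I$, showing that each object of $I$ receives at most $f$ charges. This immediately gives $|I^\ast| \le f \cdot |I|$, i.e., $|I| \ge \frac{1}{f}\OPT$.

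The charging rule is as follows. Let $o^\ast \in I^\ast$. If $o^\ast \in I$, charge $o^\ast$ to itself. Otherwise, $o^\ast$ was rejected by greedy at the moment it was processed, which means it intersects some object of $I$ that was added earlier; since greedy processes objects in nondecreasing order of size, every object already in $I$ at that moment has size at most that of $o^\ast$. Pick any such blocking object $o \in I$ with size at most the size of $o^\ast$, and charge $o^\ast$ to $o$. It remains to bound, for a fixed $o \in I$, the number of objects $o^\ast \in I^\ast$ charged to it. The self-charge contributes at most one (only $o$ itself can self-charge to $o$). For the non-self charges: every such $o^\ast$ intersects $o$ and has size at least that of $o$. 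Let $r$ be the size of $o$ and let $R$ be a minimal enclosing hypercube of $o$, which has size $r$. Every $o^\ast$ charged to $o$ intersects $R$ (since it intersects $o \subseteq R$) and has size at least $r$; by Definition~\ref{defn:fat} there are $f$ points such that each such $o^\ast$ contains one of them. Since the objects of $I^\ast$ are pairwise disjoint, no two of them can contain the same point, so at most $f$ objects of $I^\ast$ are charged to $o$ in this way. Combining with the self-charge this would give $f+1$; to get exactly $f$ I would note that if $o$ receives a self-charge then $o \in I^\ast$, and then $o$ cannot simultaneously block any other member of $I^\ast$ through the non-self rule unless that member also hits one of the $f$ points — but more simply, one observes that when $o$ itself lies in $I^\ast$ it occupies one of the $f$ points of its own enclosing box, so the count of distinct points used is still at most $f$. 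Thus every $o \in I$ receives at most $f$ charges total, and $\OPT = |I^\ast| \le f|I|$.

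The main obstacle is the bookkeeping in making the bound exactly $f$ rather than $f+1$: one has to be careful about whether the self-charge and the geometric (fatness) charges can coexist and how the $f$ representative points of Definition~\ref{defn:fat} are shared. The cleanest resolution is to apply the fatness definition directly to the set of all objects of $I^\ast$ that intersect the enclosing box $R$ of $o$ and have size at least that of $o$ — this set includes $o$ itself when $o \in I^\ast$ — and use pairwise disjointness of $I^\ast$ together with the pigeonhole principle on the $f$ points. I would also remark that the argument only uses that greedy is nonincreasing in... rather, nondecreasing in processing order of size and that rejected objects are blocked by something no larger; no further property of the geometry is needed beyond Definition~\ref{defn:fat}, which is exactly why the same greedy works uniformly across all the fat object classes listed in Theorem~\ref{t:main}.
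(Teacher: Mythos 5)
Your proof is correct and takes essentially the same route as the paper, whose one-line argument is precisely that, by Definition~\ref{defn:fat}, each object chosen by greedy can intersect at most $f$ pairwise-disjoint objects of at least its own size, so charging every optimal object to the greedy object that blocked it (or to itself) gives at most $f$ charges per greedy object. Your bookkeeping for the self-charge is fine — in fact, if $o\in I\cap I^\ast$ then no other member of $I^\ast$ intersects $o$ at all, so the self-charge and the geometric charges never coexist — and this is exactly the detail the paper leaves implicit.
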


\paragraph{Creating DISQS based on the greedy algorithm.}
We need to implement this greedy method as the query of a DISQS; that is, the running time of the greedy algorithm must be output-sensitive and the DISQS should support insertions and deletions of objects.

\textit{High-level description:} This can be done with a slight variation of classic range intersection query structures, where we can insert and delete objects, mark or unmark objects that intersect a given query, and return the smallest unmarked object. Thus each item returned by the greedy algorithm is reported after a constant number of range intersection query operations. The DISQS data structures thus have running times that match those of the underlying range intersection query structures when the query ranges are from  the same family of objects as the objects stored: $O(n^{1-\frac{1}{2d}})$ for hyperrectangles, $\tilde{O}(n^{1-\frac{1}{d+2}})$ for disks and $\tilde{O}(n^{1-\frac{1}{d(d+1)}})$ for simplices.

\textit{Formal Description:} We now formalize the ideas sketched above. Recall that we wish to implement a DISQS where the query method implements the greedy algorithm in an output sensitive way.


We show how a $O(1)$-DISQS can be obtained, provided there exists a dynamic data structure that stores a collection of geometric objects and allows to (i) \emph{mark} objects that are intersected by a given query object, and (ii) return the smallest unmarked object.
More precisely, the set of operations is the following:
\begin{itemize}
    \item $\textsc{Insert}$ / $\textsc{Delete}$
    \item $\textsc{Unmark-all}$: set all objects as unmarked.
    \item $\textsc{Mark-intersecting}(x)$: Given an object $x$ in the class, mark all those stored that intersect the object.
    \item $\textsc{Smallest-unmarked}$: Report the smallest unmarked object.
\end{itemize}

Note that the type of object passed to $\textsc{Mark-intersecting}$ is in the same class as the objects stored, that is if the structure stores squares, we are interested in marking all squares that intersect a given square. We call such a data structure an \emph{Augmented range query structure (ARQS)}. We refer to $\zeta(n)$ as the running time of the ARQS if all operations run in time $\zeta(n)$, where $n$ is the current number of objects stored.

For fat objects, an ARQS immediately gives a $O(1)$-DISQS with the same running time:
\begin{lemma}
If there is a ARQS with running time $\zeta(n)$ to maintain a class of geometric objects, then there is a $O(1)$-DISQS with running time $O(\zeta(n))$ for these objects.
\end{lemma}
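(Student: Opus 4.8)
The plan is to take the greedy algorithm of Lemma~\ref{lem:greedy} and implement each of its steps using a constant number of ARQS operations, so that an ARQS with running time $\zeta(n)$ yields a DISQS with running time $O(\zeta(n))$. First I would implement the $\textsc{Update}$ operation of the DISQS: inserting or deleting an object $u$ simply calls $\textsc{Insert}(u)$ or $\textsc{Delete}(u)$ on the ARQS, which costs $\zeta(n)$. For $\textsc{Query}$, I would realize the greedy-by-size selection as follows: call $\textsc{Unmark-all}$, then repeat the loop ``let $x \leftarrow \textsc{Smallest-unmarked}$; if none, stop; otherwise output $x$, and call $\textsc{Mark-intersecting}(x)$''. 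Since $\textsc{Mark-intersecting}(x)$ also marks $x$ itself (it intersects itself), the next $\textsc{Smallest-unmarked}$ call returns the smallest object disjoint from all previously chosen ones, which is exactly the object greedy would pick next. Each iteration of the loop performs $O(1)$ ARQS operations and produces exactly one output object, so if the returned set has size $k$ the query runs in time $O(k \cdot \zeta(n))$, which matches the output-sensitive requirement in the definition of a DISQS (with $f(n) = O(\zeta(n))$).

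Next I would argue correctness of the approximation guarantee. The loop above produces precisely the output of the greedy algorithm run on the current stored set $S$: by induction on the number of iterations, after $i$ iterations the marked objects are exactly those intersecting one of the first $i$ greedily chosen objects (together with those objects themselves), so the $(i+1)$st $\textsc{Smallest-unmarked}$ call returns the smallest object of $S$ disjoint from the first $i$ chosen ones — this is the greedy choice. The loop terminates when no unmarked object remains, i.e., when every remaining object intersects something already chosen, so the output is a maximal-by-greedy independent set. By Lemma~\ref{lem:greedy} this is a $(1/f)$-approximate MIS; since $f$ is a constant, this is an $O(1)$-approximation, and $f(n) = O(\zeta(n))$ is sublinear whenever $\zeta(n)$ is, so all requirements of a $\beta$-DISQS with $\beta = 1/f$ are met.

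The main subtlety — and the step I expect to require the most care — is ensuring that $\textsc{Smallest-unmarked}$ and $\textsc{Mark-intersecting}$ interact correctly so that the emitted set is genuinely independent and that the running time is truly output-sensitive rather than hiding a $\Theta(n)$ cost in, say, scanning for the smallest unmarked object. This is where the ``augmented'' part of the ARQS matters: the structure must maintain, as auxiliary information at internal nodes, the minimum size among unmarked objects in each subtree (or each canonical subset), so that $\textsc{Smallest-unmarked}$ costs $\zeta(n)$ rather than $\Theta(n)$, and $\textsc{Mark-intersecting}(x)$ must lazily mark entire canonical subsets and update these minima, again within $\zeta(n)$. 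Assuming the ARQS meets its stated interface with all four operations in $\zeta(n)$ time, the reduction is immediate; the burden is simply to verify, when instantiating the ARQS for hyperrectangles (kd-trees), disks, and simplices (partition trees) in Theorem~\ref{t:main}, that these standard structures can indeed be augmented this way without asymptotic loss — but that instantiation is the content of the remainder of Section~\ref{s:makedisq}, not of this lemma, so here it suffices to present the black-box reduction above.
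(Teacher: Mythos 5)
Your reduction is correct and is essentially the paper's own proof: pass updates through to the ARQS, and implement the query as \textsc{Unmark-all} followed by the loop of \textsc{Smallest-unmarked} and \textsc{Mark-intersecting}, with correctness from Lemma~\ref{lem:greedy} and output-sensitivity because each reported object costs $O(1)$ ARQS operations. Your extra remarks (self-marking of $x$, augmentation making \textsc{Smallest-unmarked} run in $\zeta(n)$) only make explicit what the paper leaves implicit or defers to the ARQS constructions.
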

\begin{proof}
The update operation on the DISQS can be passed on to the update of the ARQS. 
The query operation boils down to implementing the greedy algorithm as follows: 
\begin{itemize}
    \item $\textsc{Unmark-all}$
    \item While there are unmarked objects:
    \begin{itemize}
        \item Let $x$ be the object returned by $\textsc{Smallest-unmarked}$
        \item $\textsc{Mark-intersecting}(x)$
    \end{itemize}
    \item Return all successive objects returned by $\textsc{Smallest-unmarked}$
    \end{itemize}
    Correctness follows from Lemma~\ref{lem:greedy}.
\end{proof}



All that remains is to show that there are ARQS for various classes of objects of interest.
Given a dynamic range query structure that supports objects and queries of the given type, augmenting this data structure with markings to get an ARQS is usually an exercise, which we now overview.


Such structures can be easily built using textbook range query structures, lifted into an appropriate dimension and augmented to handle the marking in standard ways. We overview how this can be done, and do not attempt to introduce additional complexity in order to optimize logarithmic factors: the textbook structures that we use have the distinct advantage of producing tree-shaped structures from which marks and minimum-size-in-subtree can be trivially maintained; for more complicated orthogonal log-shavers such as fractional cascading \cite{DBLP:journals/algorithmica/ChazelleG86,DBLP:journals/algorithmica/ChazelleG86a} and its dynamic counterpart \cite{DBLP:journals/algorithmica/MehlhornN90} as well as more advanced non-orthogonal techniques doing this would be less straightforward.

\paragraph{Rectangles.}

We begin with rectangles, showing that
there is a ARQS for rectangles with running time $O(n^{3/4})$. Since arbitrary rectangles are not fat, this is used to get a dynamic independent set algorithm just for squares. 

Use a four-dimensional kd-tree \cite{B75} to store the rectangles as points based on the coordinates of the four sides. Finding rectangles that intersect a given rectangle can be done with a constant number of orthogonal range queries in the 4-dimensional space; $d$-dimensional kd-trees support such queries in time $O(n^{1-\frac{1}{d}})$, therefore $O(n^{3/4})$ for $d=4$. Based on standard dynamization techniques \cite{OL81b,Ov83}, kd-trees can be made dynamic with $O(\log^2 n)$ worst-case update time. 

Each four-dimensional point is augmented with a mark and the size of the square. As a kd-tree is just a tree and we can easily augment it in the usual way (see for instance Chapter~14, \emph{Augmenting Data Structures}, of CLRS~\cite{DBLP:books/daglib/0023376}) to answer the particular queries we need: some nodes of the tree may have \emph{markers} which indicate whether the subtree should be marked or unmarked, with the mark of a node being determined by the highest ancestor with a marker; every time a node is touched, markers are pushed to the children. Additionally each node indicates smallest unmarked item in its subtree. This augmentation can easily be maintained at no additional asymptotic cost during an update.


\paragraph{Orthogonal hyperrectanges.} 

The results for rectangles can be extended to $d$ dimensions using $2d$-dimensional kd-trees (representing each hyperrectangle as a point in $2d$-dimensions, specifying the start and end points in each dimension) with a running time of $O(n^{1-\frac{1}{2d}})$. Again, finding hyperrectangles that intersect a given hyperractangle can be done with a constant (depending on $d$) number of othogonal range queries in the $2d$-dimensional space. Augmentation is done in the same way as for rectangles. Any orthogonal shape of constant complexity shares the same running time by decomposition. This gives our update time for $d$-dimensional hypercubes. 

\paragraph{General polygons and simplices.}

For simplices of constant complexity, there is an ARQS with running time $\tilde{O}(n^{1-\frac{1}{d(d+1)}})$. This follows directly from the standard halfspace range searching literature, applied to simplices, which are defined by $d(d+1)$ coordinates ($d+1$ vertices of $d$ dimensions each). 

Each simplex is represented as a point in the $d(d+1)$-dimensional space and detecting intersections between simplices is equivalent to a constant (depending on $d$) number of halfspace range queries. Use, e.g., the partition tree of Matousek~\cite{M92} which answers halfspace range queries in $D$ dimensions in time $O(n^{1-\frac{1}{D}} \log^{O(1)} n) = \tilde{O}(n^{1-\frac{1}{D}})$; for $D = d (d+1)$ this evaluates to $\tilde{O}(n^{1-\frac{1}{d(d+1)}})$. This structure is presented in~\cite{M92} to support insertions and deletions in amortized update time $O(\log^{2} n)$, but using standard techniques (see, e.g.,~\cite{Ov83}) it can be transformed to a new structure that achieves worst-case update time $O(\log^{2} n)$. Moreover, this method yields a tree-shaped structure, which can be easily augmented to support the marking scheme needed.

We conclude that for $d$-dimensional simplices there is an ARQS with running time $\tilde{O}(n^{1-\frac{1}{d(d+1)}})$.  

\paragraph{Unions of objects.}

By projecting into higher dimensions, these results can be extended to classes of fat objects that are the union of a constant number of simplices or hyperrectanges, such as polygons. This gives running times of  
$O(n^{1-\frac{1}{2dk}})$ for unions of $k$ hyperrectanges and 
$\tilde{O}(n^{1 - \frac{1}{d(d+1)k}})$
for unions of $k$ simplices, assuming $d$ and $k$ are constant.

\paragraph{Disks and balls.}

Methods for disks and other algebraic curves entail lifting and applying the results for non-orthogonal range searching. In~\cite{DBLP:conf/swat/GuptaJS94} it was shown that if the objects and queries are both $d$-dimensional balls, then one can lift into $d+2$ dimensions and apply the halfspace results. Thus there is an AQRS with running time $\tilde{O}(n^{1-\frac{1}{d+2}})$, which is $\tilde{O}(n^{\frac{3}{4}})$ for disks.\\

\paragraph{Summary.} The range intersection queries combined with the greedy algorithm (Lemma~\ref{lem:greedy}) gives a DISQS whose running time depends on the range intersection queries, and whose approximation radio is the inverse of the fatness constant. In particular, the ARQS described above imply the following $\beta$-DISQS for different families of fat objects:

\begin{itemize}
    \item $ (1/4)$-DISQS with running time $O(n^{3/4})$  for axis-aligned squares in the plane.
    \item $ (1/2^d)$-DISQS with running time $O(n^{1-\frac{1}{2d}})$  for $d$-dimensional hypercubes. 
    \item $\Omega(1)$-DISQS with running time $\tilde{O}(n^{1-\frac{1}{d(d+1)}})$ for fat simplices in $d$ dimensions.
    \item $\Omega(1)$-DISQS with running time $O(n^{1-\frac{1}{2dk}})$  for fat objects which are unions of $k$ hyperrectangles in $d$ dimensions.
     \item $\Omega(1)$-DISQS with running time $\tilde{O}(n^{1-\frac{1}{d(d+1)k}})$  for fat objects which are unions of $k$ simplices in $d$ dimensions.
    \item $\Omega(1)$-DISQS with running time $\tilde{O}(n^{1-\frac{1}{d+2}})$  for balls in $d$ dimensions. In particular for disks in the plane, this gives a $(1/5)$-DISQS with running time $\tilde{O}(n^{3/4})$. 
\end{itemize}
From Lemma~\ref{lem:mix_result} fat objects have a MIX algorithm with running time $O(\log n)$. Given the MIX, DISQS, and constant $\epsilon >0$, Theorem~\ref{th:deam} yields a dynamic MIS structure, with worst-case update time depending on the range intersection queries, approximation ratio within $\epsilon$ of that from the fatness, and with only a constant-size update set per operation. Putting all these pieces together yields Theorem~\ref{t:main}.
\end{onlymain}

\section*{Acknowledgments}
We would like to thank Timothy Chan and Qizheng He for pointing out an error in a previous version of this paper. 
\newpage
\bibliographystyle{plain}
\bibliography{bib}

\newpage
\excludecomment{onlymain}
\includecomment{onlyapp}
\appendix

\end{document}